\newtheorem{Theorem}{Theorem}[section]
\newtheorem{lem}[Theorem]{Lemma}
\newtheorem{Definition}[Theorem]{Definition}
\newtheorem{Corollary}[Theorem]{Corollary}
\numberwithin{equation}{section}
\begin{document}
\title{Repeated-root constacyclic codes of length $2\ell^mp^n$\footnote{
 E-Mail addresses: bocong\_chen@yahoo.com (B. Chen),
 hdinh@kent.edu (H. Q. Dinh),
  hwliu@mail.ccnu.edu.cn (H. Liu). }}

\author{Bocong Chen$^{1,3}$,~Hai Q. Dinh$^2$,~ Hongwei Liu$^1$}

\date{\small
${}^1$School of Mathematics and Statistics,
Central China Normal University,
Wuhan, Hubei, 430079, China\\
${}^2$Department of Mathematical Sciences,
Kent State University,
4314 Mahoning Avenue, Warren, OH 44483, USA\\
${}^3$School of Physical \& Mathematical Sciences,
         Nanyang Technological University, Singapore 637616, Singapore\\}

\maketitle

\begin{abstract}
For any different odd primes $\ell$ and $p$, structure of constacyclic codes of length $2\ell^mp^n$ over a finite field $\mathbb F_q$ of characteritic $p$ and their duals is established in term of their generator polynomials. Among other results, all linear complimentary dual and self-dual constacyclic codes of length $2\ell^mp^n$ over $\mathbb F_q$ are obtained.

\medskip
\textbf{Keywords:} Finite field, constacyclic code, cyclic code, negacyclic code, dual code, generator polynomial.

\medskip
\textbf{2010 Mathematics Subject Classification:}~ 11T71; 94B05
\end{abstract}

\section{Introduction}
Constacyclic codes over finite fields form a remarkable class of linear codes,
as they  include the important family of cyclic codes. In fact, the class of cyclic codes is
one of the most significant and well studied of  all codes. Many well known codes, such as BCH, Kerdock, Golay, Reed-Muller, Preparata, Justesen and binary Hamming codes, are either cyclic codes or can be constructed from cyclic codes.
Constacyclic codes also have practical applications as
they can be efficiently encoded using simple shift registers. They have rich algebraic structures for efficient error detection and correction, which explains their preferred role in engineering.

Let $\mathbb F_q$ be the finite field of order $q$, where $q$ is a power of a prime $p$.
Given a nonzero element
$\lambda$ of $\mathbb F_q$, $\lambda$-constacyclic codes of
length $n$ over $\mathbb F_q$ are classified as the ideals $\langle g(X) \rangle$ of
the quotient ring $\mathbb F_q[X]/\langle X^n-\lambda\rangle$, where the generator polynomial $g(X)$ is the unique monic polynimial of minimum degree in the code, which is a divisor of $X^n-\lambda$.

Obviously, there are $q-1$ classes constacyclic codes of length $n$ over $\mathbb{F}_q$. However, it turned out that many of them are equivalent in the sense that they have same structures. Thus, the natural question, that under what conditions on $\lambda$ and $\mu$ such that $\lambda$-constacyclic
codes of length $n$ and $\mu$-constacyclic codes of length $n$ have the same algebraic structures, has been studied by many authors.
Particular cases of this question  have been considered since the late 1990s, even for the more general alphabets of finite rings.
 Wolfmann \cite{Wolfmann} showed that cyclic and negacyclic codes over $\mathbb Z_4$, the ring of integers modulo $4$, have the same structure for odd code lengths. Dinh and L\'opez-Permouth \cite{Dinh2004} generalized that to obtain that this fact holds  true for cyclic and negacyclic codes of odd lengths over any finite chain ring. When the lengths are a prime power, say $p^s$, Dinh \cite{Dinh2010} showed that all constacyclic codes over the finite field $\mathbb F_{q}$  have the same structure; and over the chain ring $\mathbb F_{q}+u\mathbb F_{q}$, the author gave the classification that all $(\alpha+u\beta)$-constacyclic codes have the same structures, and all $\gamma$-constacyclic codes are equivalent, for arbitrary nonzero elements $\alpha, \beta, \gamma$ of the field $\mathbb F_{q}$.

Recently, we introduced an equivalence relation $``\sim_n"$ called $n$-equivalence for the nonzero
elements of $\mathbb F_q$ to classify  constacyclic codes  of length $n$ over $\mathbb F_q$ such that the constacyclic codes belonging to
the same equivalence class have the same distance structures and the same algebraic structures \cite{CDL14}.

\begin{Definition}\label{n-equivalence}
Let $n$ be a positive integer. For any  elements $\lambda$, $\mu$
of $\mathbb F_q^*$,  we say that $\lambda$ and $\mu$ are {\em $n$-equivalent} in $\mathbb F_q^*$
and denote by  $\lambda \sim_n\mu$ if the
polynomial $\lambda X^n-\mu$ has a root in $\mathbb F_q$.
\end{Definition}

We obtained that $\lambda$ and $\mu$ are $n$-equivalent if and only if they belong  to the same coset of $\langle\xi^n\rangle$ in $\langle\xi\rangle$.
That means the distinct  cosets of  $\langle\xi^n\rangle$ in $\langle\xi\rangle$
give all the $n$-equivalence classes, thus
each $n$-equivalence class contains the same number of elements. Moreover, we showed that, for any $\lambda,\mu\in \mathbb F_q^*$, $\lambda\sim_n\mu$
if and only if there exists an $a\in \mathbb F_q^*$ such that
$$\varphi:~\mathbb F_q[X]/\langle X^n-\mu\rangle~\rightarrow~\mathbb F_q[X]/\langle X^n-\lambda\rangle$$
$$f(X)\mapsto f(aX),$$
is a ring isomorphism, and hence, the generator polynomial
of the $\mu$-constacyclic code $C$ and the generator polynomial of
the $\lambda$-constacyclic code
$\varphi(C)$
are linked in a very simple way.

For any nonzero $\lambda \in \mathbb F_q$, $\mathbb F_q[X]/\langle X^n-\lambda\rangle$ is a principal ideal ring, i.e., every ideal of  $\mathbb F_q[X]/\langle X^n-\lambda\rangle$ can be generated by a monic divisor of $X^n-\lambda$. It follows that
the irreducible factorization of $X^n-\lambda$
in $\mathbb F_q[X]$ determines all $\lambda$-constacyclic codes of length $n$ over $\mathbb F_q$.
Most of the authors assume from the outset that the code length $n$ is coprime to  $q$. This condition implies that every root of $X^n-\lambda$  is a simple root in an extension field of $\mathbb F_q$, which provides a description of all such roots, and hence, $\lambda$-constacyclic codes, by cyclotomic cosets modulo $n$.
In contrast  to  simple-root codes, constacyclic codes with $p$ dividing $n$ are called repeated-root constacyclic codes,
which were first studied in 1967 by Berman \cite{Berman}, and then
by several authors such as Massey {\it et al.} \cite{Massey}, Falkner {\it et al.} \cite{FKHZ79}, Roth, Seroussi \cite{RS86} and Salagean \cite{Salagean}.
Repeated-root codes were first investigated in the most generality in the 1990s by Castagnoli {\it et al.} \cite{Castagnoli}, and van Lint \cite{van}, where they showed that repeated-root cyclic codes have a concatenated construction, and are not asymptotically good.
However, it turns out that optimal repeated-root constacyclic  codes still exist \cite{Dinh2008,Dinh2010,Kai10}.
In particular,
it has been proved that self-dual cyclic codes over a finite field exist
precisely  when the code length is even and the characteristic of the underlying field is two \cite{Kai,Jia}.
These motivate researchers to further study this class of codes.

Recently, Dinh, in a series of papers \cite{Dinh2012, Dinh2013, Dinh2014}, determined the generator polynomials of all  constacyclic
codes over $\mathbb F_{q}$, of lengths $2p^s$, $3p^s$ and $6p^s$.
Dual constacyclic codes of these lengths were also  discussed. These results have been extended to more general code lengths.
The generator polynomials of all
constacyclic codes of length $2^tp^s$ over $\mathbb F_{q}$ were given in \cite{Bakshi1},
where $q$ is a power of an odd prime $p$.
The generator polynomials
of all constacyclic codes of length $\ell p^s$ over $\mathbb F_{q}$ were characterized in \cite{Chen} and \cite{CDL14}, where $\ell$ is a prime different
from $p$. Moreover, \cite{CDL14} identified the duals of all such conctacyclic codes, and provided all self-dual and all linear complimentary dual constacylic codes.

In this paper, we continue to extend the main results of \cite{Dinh2012, Dinh2013, Dinh2014} to a more general
code length of $2\ell^m p^n$, for any different odd primes $\ell$ and $p$. According to the equivalence classes induced by $``\sim_{2\ell^m p^n}"$,
all  constacyclic codes   of length
$2\ell^m p^n$ over $\mathbb F_{q}$ and their duals are characterized in Sections 3 and 4, respectively.
As an application, all linear complimentary dual constacylic codes of length $2\ell^m p^n$ are obtained. Since it is known that self-dual constacyclic codes can only occur among the classes of cyclic and negacyclic codes, and self-dual cyclic codes  over $\mathbb F_q$ does not exist because $p$ is odd, it follows that
all self-dual constacyclic codes of length $2\ell^m p^n$ over $\mathbb F_q$ can only occur among the class of negacyclic codes. We provide all such self-dual negacyclic codes.

\section{Preliminaries}
Starting from this section till the end of this paper, $\mathbb F_q$ denotes the finite field of order   $q$, where $q$ is a power of an odd prime $p$.
Let  $\mathbb F_q^*=\mathbb{F}_q\setminus\{0\}$.
For $\beta\in \mathbb F_q^*$,  we denote by $\rm{ord}(\beta)$
the order of $\beta$ in the group $\mathbb F_q^*$;
then $\rm{ord}(\beta)$ is a divisor of $q-1$, and $\beta$
is called a {\em primitive $\rm{ord}(\beta)$th root of unity}.
It is  known that $\mathbb F_q^*$ is  generated by a primitive $(q-1)$th root
$\xi$ of unity, i.e.,  $\mathbb F_q^*=\langle\xi\rangle$.
As usual, for integers $a,b$ and a prime $l$,
$a\mid b$ means that  $a$ divides $b$,
$l^a\Vert b$ means that $l^a\mid b$ but $l^{a+1}\nmid b$.

Let $m$ be a  positive integer and $\ell$  an odd prime different from $p$.
Let $\mathbb{Z}_{\ell^m}=\{[b]_{\ell^m}\,|\,\hbox{$b$ is an integer}\}$
be  the ring consisting of all residue classes modulo  $\ell^m$ and
$\mathbb{Z}_{\ell^m}^*$ be the
unit group of the ring.
It is well known that $\mathbb{Z}_{\ell^m}^*$ is a cyclic group.  We denote  by $\langle q\rangle$,
the cyclic subgroup of $\mathbb{Z}_{\ell^m}^*$ generated by $[q]_{\ell^m}$.
Let  $\langle q\rangle$ act on $\mathbb{Z}_{\ell^m}$
by the following rule:
$$
q^i\cdot[b]_{\ell^m}=[bq^i]_{\ell^m}, ~~~~\hbox{for any  integer $i$ and $[b]_{\ell^m}\in \mathbb{Z}_{\ell^m}$}.
$$
For any integer $t$, the orbit of $[t]_{\ell^m}$,
$$
C_t=\Big\{t,tq,tq^2,\cdots\, tq^{m_t-1}\Big\}
$$
is called the {\em $q$-cyclotomic coset of $t$ modulo $\ell^m$}, where the
elements in the brace are calculated modulo $\ell^m$ and  $m_t$ is the cardinality of the orbit of
$[t]_{\ell^m}$. It is readily seen that $m_t$ is equal to the multiplicative order of $q$ modulo $\frac{\ell^m}{\gcd(\ell^m,t)}$.

We denote by ${\rm ord}_{\ell}(q)=f$, the multiplicative order of $q$ in $\mathbb{Z}_{\ell}^*$. Write
$$
q^f=1+\ell^st,~~ \ell\nmid t,~ s\geq1.
$$
For any integer $r$, $1\leq r\leq m$,
let
\begin{equation}\label{order}
\lambda(r):=f\ell^{\max(r-s,0)}.
\end{equation}
One knows that ${\rm ord}_{\ell^{r}}(q)=\lambda(r)$ (see \cite{Bakshi2} or \cite{Sharma}).
Let $\delta(r)=\frac{\phi(\ell^{r})}{\lambda(r)}$, where $\phi$ denotes
Euler's phi-function.
Let $g$ be a fixed generator of the cyclic group $\mathbb{Z}_{\ell^m}^*$. By \cite[Theorem 1]{Sharma},
$C_0=\{0\},$ and
$$
C_{\ell^{m-r}g^k}=\Big\{\ell^{m-r}g^k, \ell^{m-r}g^kq,\cdots,\ell^{m-r}g^kq^{\lambda(r)-1}\Big\},
~0\leq k\leq\delta(r)-1, ~1\leq r\leq m,
$$
consist all the distinct $q$-cyclotomic cosets modulo $\ell^m$.
For simplify, we write
$C_{\rho_0}=\{0\}$ and $C_{\rho_k}$, $1\leq k\leq e$ to denote
all the distinct $q$-cyclotomic cosets modulo $\ell^m$.
We then see that $e=\sum_{r=1}^{m}\delta(r)$.

Take $\eta$ to be a primitive $\ell^m$th root of unity
(maybe in an extension field of $\mathbb F_q$), and denote by $M_{\rho_k}(X)$,
the minimal polynomial of $\eta^{\rho_k}$ over $\mathbb F_q$.
It is well known that (e.g. see \cite[Theorem 4.1.1]{Huffman}):
\begin{equation}\label{simple-irr-decomposition}
X^{\ell^m}-1=M_{\rho_0}(X)M_{\rho_1}(X)M_{\rho_2}(X)
 \cdots M_{\rho_e}(X),
\end{equation}
with
$$
M_{\rho_0}(X)=X-1,~~~~
M_{\rho_k}(X)=\prod\limits_{s\in C_{\rho_k}}(X-\eta^s),~1\leq k\leq e,
$$
all being monic irreducible in $\mathbb F_q[X]$.

We need to  determine
the distinct $q^2$-cyclotomic cosets modulo $\ell^m$.
It  requires to consider two subcases.
If $f={\rm ord}_{\ell}(q)$ is odd,  namely $\lambda(r)={\rm ord}_{\ell^{r}}(q)$ is odd for each $1\leq r\leq m$,
then ${\rm ord}_{\ell^{r}}(q)={\rm ord}_{\ell^{r}}(q^2)$,
which means that the cyclic subgroup generated by $[q]_{\ell^{r}}$ in $\mathbb{Z}_{\ell^{r}}^*$ is equal to
the cyclic subgroup generated by $[q^2]_{\ell^{r}}$, i.e. $\langle q\rangle=\langle q^2\rangle$
in $\mathbb{Z}_{\ell^{r}}^*$;
in particular, $\langle q\rangle=\langle q^2\rangle$ in $\mathbb{Z}_{\ell^m}^*$.
By the definition of $q^2$-cyclotomic cosets, $C_{\rho_0}=\{0\}$ and $C_{\rho_k}$, $1\leq k\leq e$,
also consist all the distinct
$q^2$-cyclotomic cosets modulo $\ell^m$. It follows that Formula~(\ref{simple-irr-decomposition})
gives the irreducible factorization of $X^{\ell^m}-1$ in $\mathbb F_{q^2}[X]$.
If $f$ is even, we deduce that ${\rm ord}_{\ell^{r}}(q^2)=\frac{\lambda(r)}{2}$ for any $1\leq r\leq m$.
It is straightforward to verify that $D_{0}=\{0\}$,
$$
D_{\ell^{m-r}g^j}=\Big\{\ell^{m-r}g^j, \ell^{m-r}g^j\cdot q^2,\cdots, \ell^{m-r}g^j\cdot q^{2(\frac{\lambda(r)}{2}-1)}\Big\},
$$
and
$$
D_{\ell^{m-r}g^jq}=\Big\{\ell^{m-r}g^jq, \ell^{m-r}g^jq\cdot q^2,\cdots, \ell^{m-r}g^jq\cdot q^{2(\frac{\lambda(r)}{2}-1)}\Big\}
$$
consist all the distinct $q^2$-cyclotomic cosets modulo $\ell^m$, where $0\leq j\leq\delta(r)-1$ and $1\leq r\leq m$.
Observe that
$$
C_{\ell^{m-r}g^j}=D_{\ell^{m-r}g^j}\bigcup D_{\ell^{m-r}g^{j}q},
\hbox{~ for each $0\leq j\leq\delta(r)-1$~ and  ~$1\leq r\leq m$}.
$$
For simplify, we write
$D_{\rho_0}=\{0\}$,  $D_{\rho_k}$ and $D_{\rho_kq}$, $1\leq k\leq e$ such that $C_{\rho_k}=D_{\rho_k}\bigcup D_{\rho_kq}$, to denote
all the distinct $q^2$-cyclotomic cosets modulo $\ell^m$.
By~\cite[Theorem 4.1.1]{Huffman} again, we have
\begin{equation}\label{simple-irr-decomposition2}
X^{\ell^m}-1=(X-1) N_{\rho_1}(X)N_{\rho_1q}(X)N_{\rho_2}(X)N_{\rho_2q}(X)
 \cdots N_{\rho_e}(X)N_{\rho_eq}(X),
\end{equation}
with
$$
N_{\rho_k}(X)=\prod\limits_{s\in D_{\rho_k}}(X-\eta^s),~~N_{\rho_kq}(X)=\prod\limits_{t\in D_{\rho_kq}}(X-\eta^t), ~~1\leq k\leq e,
$$
all being monic irreducible in $\mathbb F_{q^2}[X]$.

In the rest of this section, we recall some basic concepts and results about constacyclic  codes over $\mathbb F_q$.
Let $\mathbb F_q^n$  be the $\mathbb F_q$-vector space of $n$-tuples.
A {\em linear code} $C$ of length $n$ over $\mathbb F_q$ is an $\mathbb F_q$-subspace of $\mathbb F_q^n$.
If $\lambda$ is a nonzero element of $\mathbb F_q$, a linear code $C$ of length $n$ over $\mathbb F_q$ is called
{\em $\lambda$-constacyclic} if $(\lambda c_{n-1}, c_0,\cdots,c_{n-2})\in C$
for every $(c_{0}, c_1,\cdots,c_{n-1})\in C$.  When $\lambda$ =$1$,
$\lambda$-constacyclic codes are just cyclic codes and when $\lambda=-1$,
$\lambda$-constacyclic codes are known as negacyclic codes.

For any $\lambda$-constacyclic  code $C$ of length $n$ over $\mathbb F_q$, the {\it dual code of $C$}
is defined as $C^\perp=\{u\in \mathbb F_q^n\,|\,u\cdot v=0, \mbox{for any $v\in C$}\}$,
where $u\cdot v$ denotes the standard Euclidean inner product of $u$ and $v$ in $\mathbb F_q^n$.
The code $C$ is said to be {\em self-orthogonal}
if $C\subseteq C^\perp$ and {\it self-dual} if $C=C^\perp$.
It turns out that   the dual of a $\lambda$-constacyclic code is a $\lambda^{-1}$-constacyclic code;
specifically,  the dual of a cyclic code is a cyclic code
and the dual of a negacyclic code is a negacyclic code (e.g. see \cite[Proposition 2.2.]{Dinh2012}).

We  know that  any $\lambda$-constacyclic  code $C$
of length $n$ over $\mathbb F_q$ is identified with exactly one ideal
of the quotient algebra $\mathbb F_q[X]/\langle X^n-\lambda\rangle$,
which is generated uniquely by a monic divisor $g(X)$ of $X^n-\lambda.$
In this case, $g(X)$ is called the {\em generator polynomial}
of $C$ and denote it by $C=\langle g(X)\rangle$.
Assume that $C=\langle g(X)\rangle$ is a $\lambda$-constacyclic  code of length $n$ over $\mathbb F_q$,
where $g(X)$ is the generator polynomial of $C$. Let $h(X)=\frac{X^n-\lambda}{g(X)}$.
It is known that its dual code
$C^\perp$ has generator polynomial $h^*(X)$, where $h^*(X)=h(0)^{-1}X^{\deg h}h(\frac{1}{X})$
is called  the {\it reciprocal polynomial} of $h(X)$.
Note that $h^*(X)$ is  a monic divisor of $X^n-\lambda^{-1}$.
If a polynomial is equal to its reciprocal polynomial, then it is called {\it self-reciprocal}.
Suppose that $f(X)\in \mathbb{F}_q[X]$ is a polynomial with leading coefficient
$a_n\neq0.$ We denote by $\hat f(X)$,  the monic  polynomial such that
$\hat f(X)=a_n^{-1}f(X)$.

\section{Constacyclic codes of length $2\ell^mp^n$ over $\mathbb F_q$}
Let $\ell$ be an odd prime different from $p$ as before.
Recall from (\ref{order}) that ${\rm ord}_{\ell^{r}}(q)=\lambda(r)$ and $\delta(r)=\frac{\phi(\ell^{r})}{\lambda(r)}$, $1\leq r\leq m$.
We take a primitive $\ell^m$th root $\eta$ of unity in the finite field $\mathbb F_{q^{\lambda(m)}}$; by (\ref{simple-irr-decomposition}),
we have the factorization of $X^{2\ell^m p^n}-1$ into irreducible factors over $\mathbb F_q$ as follows:
\begin{equation}\label{cyclic-decomposition}
X^{2\ell^m p^n}-1=(X^{2\ell^m}-1)^{p^n}=(X^{\ell^m}-1)^{p^n}(X^{\ell^m}+1)^{p^n}=
\prod\limits_{i=0}^{e}M_{\rho_i}(X)^{p^n}\hat M_{\rho_i}(-X)^{p^n}.
\end{equation}

The following lemma (proven in \cite{CDL14}) shows that in order to obtain all constacyclic codes of length $n$ over $\mathbb F_q$,
we only need to consider  $\lambda$-constacyclic codes, where $\lambda$ runs over
any fixed transversal of $\langle \xi^n\rangle$ in $\langle\xi\rangle$.
\begin{lem}\label{n-equivalence} For any $\lambda,\mu\in \mathbb F_q^*$,
the following four statements are equivalent:

{\bf(i)}\quad $\lambda^{-1}\mu\in\langle\xi^n\rangle$.

{\bf(ii)}\quad
$(\lambda^{-1}\mu)^d=1,$ where $d=\frac{q-1}{\gcd(n,q-1)}$.

{\bf(iii)}\quad $\lambda$ and $\mu$ are $n$-equivalent in $\mathbb F_q^*$, namely there exists an element
$a\in \mathbb F_{q}^*$ such that $a^n\lambda=\mu$.

{\bf(iv)}\quad There exists an $a\in \mathbb F_q^*$ such that
$$\varphi_a:~\mathbb F_q[X]/\langle X^n-\mu\rangle~\rightarrow~\mathbb F_q[X]/\langle X^n-\lambda\rangle$$
$$f(X)\mapsto f(aX),$$
is an $\mathbb F_q$-algebra isomorphism.

\noindent
In particular, the number of the $n$-equivalence classes
in $\mathbb F_q^*$ is equal to $\gcd(n, q-1).$
\end{lem}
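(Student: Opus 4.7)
The plan is to run the cycle of implications (i)$\Leftrightarrow$(ii), (i)$\Leftrightarrow$(iii), (iii)$\Leftrightarrow$(iv), and then derive the count of $n$-equivalence classes from (i) by counting cosets of $\langle\xi^n\rangle$ in $\langle\xi\rangle$. Since $\langle\xi\rangle=\mathbb F_q^*$ is cyclic of order $q-1$, almost every step reduces to an elementary statement about a cyclic group; the only content that is not pure group theory is the passage to the quotient algebras needed for (iv).

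For (i)$\Leftrightarrow$(ii), observe that $\xi^n$ has order $(q-1)/\gcd(n,q-1)=d$, so $\langle\xi^n\rangle$ is the unique subgroup of $\mathbb F_q^*$ of order $d$ and therefore coincides with the set of $d$-th roots of unity in $\mathbb F_q$. Applying this to $\beta:=\lambda^{-1}\mu$ gives the desired equivalence. For (i)$\Leftrightarrow$(iii), note that the $n$-th power map on the cyclic group $\langle\xi\rangle$ has image exactly $\langle\xi^n\rangle$; hence $\lambda^{-1}\mu\in\langle\xi^n\rangle$ iff there exists $a\in\mathbb F_q^*$ with $a^n=\lambda^{-1}\mu$, which is exactly condition (iii). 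Together these pin down (i), (ii), (iii) as a single triple.

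To handle (iii)$\Leftrightarrow$(iv), the key identity is
\begin{equation*}
\varphi_a(X^n-\mu)=(aX)^n-\mu=a^n(X^n-\lambda)+(a^n\lambda-\mu),
\end{equation*}
whose residue modulo $X^n-\lambda$ is the scalar $a^n\lambda-\mu$. Thus the substitution $X\mapsto aX$ descends to a well-defined $\mathbb F_q$-algebra homomorphism $\varphi_a\colon\mathbb F_q[X]/\langle X^n-\mu\rangle\to\mathbb F_q[X]/\langle X^n-\lambda\rangle$ if and only if $a^n\lambda=\mu$. When this holds, the analogous substitution $X\mapsto a^{-1}X$ produces a two-sided inverse (since $(a^{-1})^n\mu=\lambda$), so $\varphi_a$ is automatically an isomorphism; conversely, if $\varphi_a$ is a homomorphism at all then the same scalar residue must vanish, giving back $a^n\lambda=\mu$. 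Finally, since (i) identifies the $n$-equivalence classes with the cosets of $\langle\xi^n\rangle$ in $\langle\xi\rangle$, their number is $[\langle\xi\rangle:\langle\xi^n\rangle]=(q-1)/d=\gcd(n,q-1)$.

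There is no serious obstacle in this lemma; the entire proof is a routine verification. The most delicate point is packaging the argument so that a single computation, namely the displayed identity above, handles both (iii)$\Rightarrow$(iv) and (iv)$\Rightarrow$(iii) at once. Everything else reduces to standard facts about the structure of the cyclic group $\mathbb F_q^*$.
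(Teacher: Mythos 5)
Your proof is correct, and every step checks out: the order computation for $\xi^n$ gives (i)$\Leftrightarrow$(ii), surjectivity of the $n$-th power map onto $\langle\xi^n\rangle$ gives (i)$\Leftrightarrow$(iii), the scalar-residue identity $\varphi_a(X^n-\mu)=a^n(X^n-\lambda)+(a^n\lambda-\mu)$ cleanly handles both directions of (iii)$\Leftrightarrow$(iv), and the class count is just the index $[\langle\xi\rangle:\langle\xi^n\rangle]$. Note that this paper does not actually prove the lemma --- it is quoted from the authors' earlier work \cite{CDL14} --- but your argument is the standard one and there is nothing to add.
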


If $\lambda$ and $\mu$ are $n$-equivalent,
we say that,  the $\lambda$-constacyclic codes of length $n$ are $n$-equivalent
to the $\mu$-constacyclic codes of length $n$.
That is,  it is enough to study the $n$-equivalence classes of constacyclic codes.

By Lemma~\ref{n-equivalence}, the number of  $2\ell^m p^n$-equivalence classes
in $\mathbb F_q^*$ is equal to $\gcd(2\ell^m p^n, q-1)=\gcd(2\ell^m, q-1)$.
Clearly, the cases $\gcd(\ell, q-1)=1$ and $\gcd(\ell, q-1)=\ell$ are distinguishable.

We first consider the case $\gcd(\ell, q-1)=1$.
In this situation, we have
$$
\mathbb F_q^*=\big\langle\xi\big\rangle=
\big\langle\xi^{2\ell^mp^n}\big\rangle\cup\xi^{p^n}\big\langle\xi^{2\ell^mp^n}\big\rangle,
$$
which means the $\lambda$-constacyclic codes are $2\ell^mp^n$-equivalent to
the cyclic codes or  $\xi^{p^n}$-constacyclic codes by Lemma~\ref{n-equivalence}.
Now we  take an element $\alpha_1$  in $\mathbb F_{q^2}$ satisfying $\alpha_1^2=\xi$.
We see that $\alpha_1\in \mathbb{F}_{q^2}$ and $\alpha_1\not\in \mathbb{F}_q$,
because $X^2-\xi\in \mathbb{F}_q[X]$ is irreducible.
It is readily seen that $\alpha_1\in \mathbb{F}_{q^2}$ is a primitive $2(q-1)$th root of unity.
Let
$S=\{\alpha\in\mathbb{F}_{q^2}^*\,|\, \alpha ~\hbox{is a primitive $2(q-1)$th root of unity}\}$.
It follows from $\gcd(\ell^m,2(q-1))=1$ that there is a  bijection
$\theta:~S\rightarrow S$  such that $\theta(\alpha)=\alpha^{\ell^m}$ for any $\alpha\in S$.
Thus, a unique element of $S$,  say $\beta_1$,  can be found such that $\alpha_1^{-1}=\theta(\beta_1)=\beta_1^{\ell^m}$, i.e.,
$\beta_1^{\ell^m}\alpha_1=1$.
Obviously, $\beta_1\not\in \mathbb{F}_q$.
We claim that $\beta_1^q=-\beta_1$. To see this, it is enough to show that $(X-\beta_1)(X+\beta_1)\in \mathbb{F}_q[X]$,
i.e., $\beta_1^2\in \mathbb{F}_q$. Noting that $\beta_1^{2q-2}=1$,
which implies $\beta_1^{2q}=\beta_1^{2}$, as claimed.

\begin{Theorem}\label{thm-cyclic}
With respect to  the above notations, we assume further that  $\gcd(\ell, q-1)=1$.
Let $C$ be a $\lambda$-constacyclic code
of length $2\ell^m p^n$ over $\mathbb F_q$. Then one of the following statements holds:

\item[{\bf(A)}]either $\lambda\in\langle \xi^2\rangle$, then
$a^{2\ell^m p^n}\lambda=1$ for some $a\in \mathbb F_q^*$, and we have
$$ C=\left\langle \prod\limits_{i=0}^{e}\hat M_{\rho_i}(aX)^{\varepsilon_i}\hat M_{\rho_i}(-aX)^{\epsilon_i}\right\rangle,
 \qquad 0\le \varepsilon_i, \epsilon_i\le p^n,~~\hbox{~for any ~$i=0,1,\cdots,e$};
$$

\item[{\bf(B)}]or $\lambda\notin\langle \xi^2\rangle$, then
$b^{2\ell^m p^n}\lambda=\xi^{p^n}$ for some $b\in \mathbb F_q^*$, and
there are two subcases:

\item[{\bf(B1)}]if $f={\rm ord}_{\ell}(q)$ is odd, we have that

$$ C=\left\langle
\prod\limits_{i=0}^{e}\hat S_i(bX)^{\varepsilon_i}\right\rangle,
 \qquad 0\le \varepsilon_i\le p^n, \hbox{~for any ~$i=0,1,\cdots,e,$}
$$
where  $S_i(X)=\hat M_{\rho_i}(\beta_1X)\hat M_{\rho_i}(-\beta_1X)$
 for each $0\leq i\leq e$.

\item[{\bf(B2)}]if $f={\rm ord}_{\ell}(q)$ is even, we have that

$$ C=\left\langle
\hat P(bX)^{\varepsilon}\prod\limits_{i=0}^{e}\hat Q_i(bX)^{\varepsilon_i}\hat R_i(bX)^{\epsilon_i}\right\rangle,
 \qquad 0\le \varepsilon, \varepsilon_i,\epsilon_i\le p^n, \hbox{~for any ~$i=0,1,\cdots,e,$}
$$
where  $P(X)=(X-\beta_1^{-1})(X+\beta_1^{-1})$, $Q_i(X)=\hat N_{\rho_i}(\beta_1X)\hat N_{\rho_iq}(-\beta_1X)$
and $R_{i}(X)=\hat N_{\rho_iq}(\beta_1X)\hat N_{\rho_i}(-\beta_1X)$  for each $0\leq i\leq e$.
\end{Theorem}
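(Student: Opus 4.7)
The plan is to use Lemma~\ref{n-equivalence} to cut the classification down to two representative cases and then to factor the polynomial $X^{2\ell^m p^n}-\mu$ for each representative $\mu$. Since $\gcd(\ell,q-1)=1$ and $p\nmid q-1$, we have $\gcd(2\ell^m p^n,q-1)=2$, so there are exactly two $2\ell^m p^n$-equivalence classes in $\mathbb F_q^*$: $\langle\xi^2\rangle$ with representative $1$, and $\xi^{p^n}\langle\xi^2\rangle$ with representative $\xi^{p^n}$ (the latter lies in the nontrivial coset because $p^n$ is odd). Every $\lambda$-constacyclic code of length $2\ell^m p^n$ is therefore carried, via the ring isomorphism $\varphi_a$ (resp.~$\varphi_b$) of Lemma~\ref{n-equivalence}, onto a cyclic code (Case~A) or a $\xi^{p^n}$-constacyclic code (Case~B); the generator polynomial is read off by transporting a monic divisor of $X^{2\ell^m p^n}-1$ (resp.~of $X^{2\ell^m p^n}-\xi^{p^n}$) through $X\mapsto aX$ (resp.~$bX$) and renormalizing each factor via the hat operation.

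For Case~A, the factorization~(\ref{cyclic-decomposition}) already exhibits $X^{2\ell^m p^n}-1$ as a product of $2(e+1)$ pairwise distinct monic irreducibles over $\mathbb F_q$, each with multiplicity $p^n$, so monic divisors are parametrized by pairs $0\le\varepsilon_i,\epsilon_i\le p^n$. Case~B is the substantive part: since $X^{2\ell^m p^n}-\xi^{p^n}=(X^{2\ell^m}-\xi)^{p^n}$, everything reduces to factoring $X^{2\ell^m}-\xi$ in $\mathbb F_q[X]$. From $\alpha_1^2=\xi$ and $\beta_1^{\ell^m}\alpha_1=1$ one checks that the $2\ell^m$ roots are exactly $\pm\beta_1^{-1}\eta^j$, $0\le j<\ell^m$. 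The decisive computation is the action of the Frobenius $\sigma\colon x\mapsto x^q$: the identity $\beta_1^q=-\beta_1$ yields
$$
\sigma\bigl(\beta_1^{-1}\eta^j\bigr)=-\beta_1^{-1}\eta^{jq},\qquad \sigma^2\bigl(\beta_1^{-1}\eta^j\bigr)=\beta_1^{-1}\eta^{jq^2},
$$
so $\sigma$ flips the $\pm$-sign at each step while advancing the $\eta$-exponent by a factor of $q$, and $\sigma^2$ is governed by the $q^2$-cyclotomic action on $\mathbb Z_{\ell^m}$. In subcase~(B1), where $f$ is odd, the $q$- and $q^2$-cyclotomic cosets coincide, so each Frobenius orbit contains both signs, has size $2|C_{\rho_i}|$, and has $S_i(X)$ as its minimal polynomial. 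In subcase~(B2), where $f$ is even, the splitting $C_{\rho_i}=D_{\rho_i}\cup D_{\rho_iq}$ causes the orbit of $\beta_1^{-1}\eta^{\rho_i}$ to consist of $\beta_1^{-1}\eta^s$ for $s\in D_{\rho_i}$ together with $-\beta_1^{-1}\eta^s$ for $s\in D_{\rho_iq}$, giving $Q_i(X)$; the orbit of $-\beta_1^{-1}\eta^{\rho_i}$ swaps the two signs, giving $R_i(X)$; and $\pm\beta_1^{-1}$ (the $\rho_0=0$ case) forms a separate size-$2$ orbit with minimal polynomial $P(X)$, irreducible because $\beta_1\notin\mathbb F_q$. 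A degree count confirms that these polynomials exhaust all $2\ell^m$ roots, and the conclusion follows after transporting via $\varphi_b$ and making each factor monic.

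The main obstacle is the Galois-orbit analysis in Case~B: one must carefully exploit $\beta_1^q=-\beta_1$ to track how the $\pm$-sign alternates along iterates of $\sigma$, and then interlock this alternation with the parity of $f={\rm ord}_\ell(q)$ to see precisely how the orbits align with either the $q$-cyclotomic partition (subcase~B1) or the finer $q^2$-cyclotomic partition (subcase~B2). Once the root-set of $X^{2\ell^m}-\xi$ is correctly decomposed into Galois orbits, identifying the irreducible factors over $\mathbb F_q$ and enumerating all monic divisors of $X^{2\ell^m p^n}-\lambda$ is routine.
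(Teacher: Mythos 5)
Your proposal is correct and follows essentially the same route as the paper: reduce to the representatives $1$ and $\xi^{p^n}$ via Lemma~\ref{n-equivalence}, and in Case~B exploit $\beta_1^{\ell^m}\alpha_1=1$ and $\beta_1^q=-\beta_1$ to see that the Frobenius sends $\beta_1^{-1}\eta^k$ to $-\beta_1^{-1}\eta^{kq}$, so that the irreducible factors over $\mathbb F_q$ are governed by the parity of $f$ through the $q$- versus $q^2$-cyclotomic cosets. The paper packages this by first factoring $X^{\ell^m}\mp\alpha_1$ over $\mathbb F_{q^2}$ and then pairing conjugate factors, but the underlying computation is identical to your Galois-orbit analysis.
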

\begin{proof}
Since $\gcd(\ell,q-1)=1$, it is clear that $\langle\xi^{2\ell^mp^n}\rangle=\langle \xi^2\rangle$.
From $\lambda\in \langle \xi^2\rangle$, we have $\lambda^{-1}\in\langle \xi^2\rangle$, which implies that
an element $a\in \mathbb{F}_q^*$ can be found satisfying $a^{2\ell^m p^n}=\lambda^{-1}$, i.e., $a^{2\ell^m p^n}\lambda=1$.
By (\ref{cyclic-decomposition}),
$$
(aX)^{2\ell^mp^n}-1=\prod\limits_{i=0}^{e}M_{\rho_i}(aX)^{p^n}\hat M_{\rho_i}(-aX)^{p^n}.
$$
This leads to
$$
X^{2\ell^mp^n}-\lambda=X^{2\ell^mp^n}-a^{-2\ell^mp^n}=\prod\limits_{i=0}^{e}\hat M_{\rho_i}(aX)^{p^n}\hat M_{\rho_i}(-aX)^{p^n},
$$
proving $(A)$.

Assume now that $\lambda\notin\langle \xi^2\rangle$, which forces $\lambda\in\xi^{p^n}\langle \xi^2\rangle$.
We first give the irreducible factorization
of $X^{2\ell^mp^n}-\xi^{p^n}$ over $\mathbb F_q$.
Clearly, it suffices to determine the irreducible factors of $X^{2\ell^m}-\xi$ over $\mathbb F_q$.
As discussed previously, we take $\alpha_1$ to be an element in $\mathbb F_{q^2}$ so that
$\alpha_1^2=\xi$. That is,  we have the irreducible factorization of
$X^2-\xi$ in $\mathbb F_{q^2}[X]$,
$X^2-\xi=(X-\alpha_1)(X+\alpha_1)$. It follows that  $X^{2\ell^m}-\xi=(X^{\ell^m}-\alpha_1)(X^{\ell^m}+\alpha_1)$.
There exists an element $\beta_1\in \mathbb F_{q^2}$ such that $\beta_1^{\ell^m}\alpha_1=1$; furthermore,  $\beta_1$
satisfies $\beta_1^q=-\beta_1$.

At this point, the cases $f$ being odd and  even diverge.

Assume first that $f$ is odd. By the discussion in Section 2, we know that
$C_{\rho_0}=\{0\}$ and $C_{\rho_k}$, $1\leq k\leq e$,  consist all the distinct
$q^2$-cyclotomic cosets modulo $\ell^m$.
That is to say, Formula~(\ref{simple-irr-decomposition}) gives the irreducible factorization of $X^{\ell^m}-1$ over $\mathbb F_{q^2}$.
Substituting $\beta_1X$ for $X$ into Formula~(\ref{simple-irr-decomposition}),
we get the irreducible factorization of $X^{\ell^m}-\alpha_1$ over $\mathbb F_{q^2}$:
$$
X^{\ell^m}-\alpha_1=\hat M_{\rho_0}(\beta_1X)\hat M_{\rho_1}(\beta_1X)
 \cdots\hat M_{\rho_e}(\beta_1X).
$$
Similarly, we have the irreducible factorization of $X^{\ell^m}+\alpha_1$ over $\mathbb F_{q^2}$:
$$
X^{\ell^m}+\alpha_1=\hat M_{\rho_0}(-\beta_1X)\hat M_{\rho_1}(-\beta_1X)
 \cdots\hat M_{\rho_e}(-\beta_1X).
$$
Combining these results, we have
$$
X^{2\ell^m}-\xi=(X^{\ell^m}-\alpha_1)(X^{\ell^m}+\alpha_1)=
\prod\limits_{i=0}^{e}\hat M_{\rho_i}(\beta_1X)\hat M_{\rho_i}(-\beta_1X),
$$
which is the monic irreducible factorization of $X^{2\ell^m}-\xi$ over $\mathbb F_{q^2}$.
Let $S_i(X)=\hat M_{\rho_i}(\beta_1X)\hat M_{\rho_i}(-\beta_1X)$ for each $0\leq i\leq e$.
We claim that $S_i(X)$ is an irreducible polynomial over $\mathbb F_q$.
Recall that
$$
\hat M_{\rho_i}(\beta_1X)=\prod\limits_{k\in C_{\rho_i}}(X-\beta_1^{-1}\eta^k) ~~\hbox{and}~~
\hat M_{\rho_i}(-\beta_1X)=\prod\limits_{k\in C_{\rho_i}}(X+\beta_1^{-1}\eta^k).
$$
Then $\beta_1^{-1}\eta^k$ gives all the roots of $\hat M_{\rho_i}(\beta_1X)$ when $k$ ranges over $C_{\rho_i}$.
Now
$
(\beta_1^{-1}\eta^k)^q=\beta_1^{-q}\eta^{kq}=-\beta_1^{-1}\eta^{kq},
$
which is the root of $\hat M_{\rho_i}(-\beta_1X)$.
We deduce that $S_i(X)=\hat M_{\rho_i}(\beta_1X)\hat M_{\rho_i}(-\beta_1X)$
is an irreducible polynomial over $\mathbb F_q$, as claimed.
We get the irreducible factorization of $X^{2\ell^mp^n}-\xi$ over $\mathbb F_q$ as follows:
\begin{equation}\label{xi}
X^{2\ell^mp^n}-\xi^{p^n}=(X^{2\ell^m}-\xi)^{p^n}=
\prod\limits_{i=0}^{e}\big(\hat M_{\rho_i}(\beta_1X)\hat M_{\rho_i}(-\beta_1X)\big)^{p^n}
=\prod\limits_{i=0}^{e}S_i(X)^{p^n}.
\end{equation}
Since $\lambda\in\xi^{p^n}\langle \xi^{2\ell^mp^n}\rangle$,
there exists an element $b\in \mathbb F_q^*$ such that $b^{2\ell^m p^n}\lambda=\xi^{p^n}$.
We establish  the following $\mathbb F_q$-algebra isomorphism:
\begin{equation*}
\quad \mathbb F_q[X]/\langle X^{2\ell^mp^n}-\xi^{p^n}\rangle~\longrightarrow~
  \mathbb F_q[X]/\langle X^{2\ell^mp^n}-\lambda\rangle,~~f(X)\longrightarrow f(bX).
\end{equation*}
By (\ref{xi}), we get the irreducible factorization of $X^{2\ell^mp^n}-\lambda$ over $\mathbb F_q$:
$$
X^{2\ell^mp^n}-\lambda=\prod\limits_{i=0}^{e}\hat S_i(bX)^{p^n},
$$
which gives the desired result.

It remains to consider the case when $f$ is even. It is known that $D_0$, $D_{\ell^{m-r}g^j}$ and
$D_{\ell^{m-r}g^jq}$ consist all the distinct $q^2$-cyclotomic cosets modulo $\ell^m$,
where $0\leq j\leq\delta(r)-1$ and $1\leq r\leq m$. That is,
\begin{equation*}
X^{\ell^m}-1=(X-1) N_{\rho_1}(X)N_{\rho_1q}(X)N_{\rho_2}(X)N_{\rho_2q}(X)
 \cdots N_{\rho_e}(X)N_{\rho_eq}(X),
\end{equation*}
gives the irreducible factorization of $X^{\ell^m}-1$ over $\mathbb F_{q^2}$ as has been shown in (\ref{simple-irr-decomposition2}).
Working as the same with the case of $f$ being odd, we get the irreducible factorizations of $X^{\ell^m}-\alpha_1$
and  $X^{\ell^m}+\alpha_1$ over $\mathbb F_{q^2}$:
$$
X^{\ell^m}-\alpha_1=(X-\beta_1^{-1})\hat N_{\rho_1}(\beta_1X)\hat N_{\rho_1q}(\beta_1X)\hat N_{\rho_2}(\beta_1X)\hat N_{\rho_2q}(\beta_1X)
 \cdots \hat N_{\rho_e}(\beta_1X)\hat N_{\rho_eq}(\beta_1X),
$$
$$
X^{\ell^m}+\alpha_1=(X+\beta_1^{-1})\hat N_{\rho_1}(-\beta_1X)\hat N_{\rho_1q}(-\beta_1X)\hat N_{\rho_2}(-\beta_1X)\hat N_{\rho_2q}(-\beta_1X)
 \cdots \hat N_{\rho_e}(-\beta_1X)\hat N_{\rho_eq}(-\beta_1X).
$$
Now the irreducible factorization of $X^{2\ell^m}-\xi$ over $\mathbb F_{q^2}$ is given by
$$
X^{2\ell^m}-\xi=(X^{\ell^m}-\alpha_1)(X^{\ell^m}+\alpha_1)=(X-\beta_1^{-1})(X+\beta_1^{-1})
\prod\limits_{i=0}^{e}\hat N_{\rho_i}(\beta_1X)\hat N_{\rho_iq}(\beta_1X)\hat N_{\rho_i}(-\beta_1X)\hat N_{\rho_iq}(-\beta_1X).
$$
Let $P(X)=(X-\beta_1^{-1})(X+\beta_1^{-1})$, $Q_i(X)=\hat N_{\rho_i}(\beta_1X)\hat N_{\rho_iq}(-\beta_1X)$
and $R_{i}(X)=\hat N_{\rho_iq}(\beta_1X)\hat N_{\rho_i}(-\beta_1X)$  for each $0\leq i\leq e$.
Using arguments similar to the proof in (A), we see that $P(X)$, $Q_i(X)$ and $R_i(X)$ are irreducible polynomials over $\mathbb F_q$.
Then we get the irreducible factorization of $X^{2\ell^mp^n}-\xi^{p^n}$ over $\mathbb F_q$ as follows:
\begin{equation}\label{xi2}
X^{2\ell^mp^n}-\xi^{p^n}=(X^{2\ell^m}-\xi)^{p^n}=P(X)^{p^n}
\prod\limits_{i=0}^{e}Q_i(X)^{p^n}R_i(X)^{p^n}.
\end{equation}
Finally, we get the irreducible factorization of $X^{2\ell^mp^n}-\lambda$ over $\mathbb F_q$:
$$
X^{2\ell^mp^n}-\lambda=\hat P(bX)^{p^n}\prod\limits_{i=0}^{e}\hat Q_i(bX)^{p^n}\hat R_i(bX)^{p^n}.
$$
\end{proof}

Next we consider the  case $\gcd(\ell, q-1)=\ell$, namely $\ell\mid(q-1)$.
We use Lemma~\ref{n-equivalence} again to obtain the concerning results.
We first adopt the following notations:
$
\ell^u\Vert(q-1),v=\min\{m,u\}$ and $\zeta=\xi^{\frac{q-1}{\ell^v}}.
$

\begin{Theorem}\label{thm5} With respect to the above notations, we assume further that $\ell\mid(q-1)$.
For any nonzero element $\lambda$ of $\mathbb F_q$ and any
$\lambda$-constacyclic code $C$ of length $2\ell^mp^n$ over $\mathbb F_q$,
one of the following  holds:

\item[{\bf (I)}]$\lambda\in\langle \xi^{2\ell^v}\rangle$, then $c_1^{2\ell^mp^n}\lambda=1$ for an element
$c_1\in \mathbb F_q$ and we have (The empty product is taken to be $1$):
\begin{equation*}
C=\left\langle\prod\limits_{i=0}^{\ell^v-1}\big(X-c_1^{-1}\zeta^i\big)^{\varepsilon_i}\big(X+c_1^{-1}\zeta^i\big)^{\epsilon_i}\cdot
\prod\limits_{j=1}^{m-u}\prod\limits_{k=1 \atop{ \ell\,\nmid \,k}}^{\ell^v}
\big(X^{\ell^j}-c_1^{-\ell^j}\zeta^k\big)^{\tau^j_k}\big(X^{\ell^j}+c_1^{-\ell^j}\zeta^k\big)^{\sigma_k^j}\right\rangle,
\end{equation*}
where $0\le  \varepsilon_i,\epsilon_i\le p^n$ for any $0\leq i\leq\ell^v-1$,  and
$0\leq\tau_k^j,\sigma_k^j\le p^n$ for each $1\leq j\leq m-u$ and $1\leq k\leq\ell^v$ with $\ell\nmid k.$

\item[{\bf (II)}]$\lambda\in\xi^{\ell^vp^n}\langle \xi^{2\ell^v}\rangle$, then $c_2^{2\ell^mp^n}\lambda=\xi^{\ell^vp^n}$ for an element
$c_2\in \mathbb F_q$ and one of the following  holds:

\item[{\bf(II.A)}]if $m\leq u$, then
$$
C=\left\langle\prod\limits_{i=0}^{\ell^m-1}(X^2-c_2^{-2}\xi\alpha^i)^{\varepsilon_i}\right\rangle,
 \qquad 0\le \varepsilon_i\le p^n, ~~\hbox{~for any ~$i=0,1,\cdots,\ell^m-1$},
$$
where $\alpha=\xi^{\frac{q-1}{\ell^m}}$ is a primitive $\ell^m$th root of unity in $\mathbb F_q;$

\item[{\bf(II.B)}]otherwise, we have that

\begin{equation*}
C=\left\langle\prod\limits_{i=0}^{\ell^u-1}\big(X^2-c_2^{-2}\beta^{-1}\zeta^i\big)^{\varepsilon_i}\cdot
\prod\limits_{j=1}^{m-u}\prod\limits_{k=1 \atop{ \ell\,\nmid \,k}}^{\ell^u}
 \big(X^{2\ell^j}-c_2^{-2\ell^j}\beta^{-\ell^j}\zeta^k\big)^{\sigma_k^j}\right\rangle,
 ~~~~\qquad 0\le  \varepsilon_i, \sigma_k^j\le p^n
\end{equation*}
where
$\beta$ is an element  in $\langle \xi^{\ell^u}\rangle$ such that $\beta^{\ell^m}\xi^{\ell^u}=1$.

\item[{\bf (III)}]$\lambda\in\xi^{jp^n}\langle \xi^{2\ell^v}\rangle$ with $1\leq j\leq 2\ell^v-1$ except $j=\ell^v$,
then there exists   $d_1\in \mathbb F_q^*$ such that $d_1^{2\ell^mp^n}\lambda=\xi^{jp^n};$
write $j=y\ell^z$ with $\gcd(y,\ell)=1$ and $0\le z\le v-1$.
There are two subcases:

\item[{\bf(III.A)}] if the integer $y$ is odd, then we have
$$ C=
\huge\left\langle\prod\limits_{i=0}^{\ell^z-1}
 (X^{2\ell^{m-z}}-d_1^{-2\ell^{m-z}}\delta^i\xi^{y})^{\varepsilon_i}
 \huge\right\rangle,
 \qquad 0\leq \varepsilon_i \leq p^n;
$$

\item[{\bf(III.B)}] otherwise, writing $y=2y_0$,  we have
$$ C=
\huge\left\langle\prod\limits_{i=0}^{\ell^z-1}
(X^{\ell^{m-z}}-d_1^{-\ell^{m-z}}\delta^i\xi^{y_0})^{\varepsilon_i}
(X^{\ell^{m-z}}+d_1^{-\ell^{m-z}}\delta^{i}\xi^{y_0})^{\epsilon_i}
 \huge\right\rangle,
 \qquad 0\leq \varepsilon_i, \epsilon_i \leq p^n,
$$
where $\delta=\xi^{(q-1)/\ell^z}$
is a primitive $\ell^z$th root of unity in $\mathbb F_q$.
\end{Theorem}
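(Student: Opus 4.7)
The plan is to follow the same scheme as the proof of Theorem~\ref{thm-cyclic}: by Lemma~\ref{n-equivalence} it suffices to factor $X^{2\ell^mp^n}-\mu$ into monic irreducibles over $\mathbb F_q$ for each $\mu$ in a transversal of $\langle\xi^{2\ell^mp^n}\rangle$ in $\langle\xi\rangle$, and then to transport the factorization to an arbitrary $\lambda$ via the $\mathbb F_q$-algebra isomorphism $\varphi_c\colon f(X)\mapsto f(cX)$ supplied by $c^{2\ell^mp^n}\lambda=\mu$. Since $\gcd(2\ell^mp^n,q-1)=2\ell^v$ and $\gcd(p^n,q-1)=1$, the elements $\xi^{jp^n}$ for $0\le j\le 2\ell^v-1$ form such a transversal, and the three cases in the statement correspond respectively to $j=0$, $j=\ell^v$, and the remaining values of $j$. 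Because $\operatorname{char}\mathbb F_q=p$, one has $X^{2\ell^mp^n}-\xi^{jp^n}=(X^{2\ell^m}-\xi^j)^{p^n}$, so the whole task reduces to a monic irreducible factorization of $X^{2\ell^m}-\xi^j$ in $\mathbb F_q[X]$ in each case.

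For Case~(I) ($j=0$) I would split $X^{2\ell^m}-1=(X^{\ell^m}-1)(X^{\ell^m}+1)$ and, using that $\operatorname{ord}_\ell(q)=1$ forces $s=u$ and $\operatorname{ord}_{\ell^{u+j}}(q)=\ell^j$ in the notation of~(\ref{order}), regroup the cyclotomic decomposition $X^{\ell^m}-1=\prod_{r=0}^m\Phi_{\ell^r}(X)$ so that each $\Phi_{\ell^{u+j}}$ becomes $\prod_{\ell\,\nmid\, k}(X^{\ell^j}-\zeta^k)$; since $\ell^m$ is odd, the analogous treatment of $X^{\ell^m}+1$ contributes the $(X^{\ell^j}+\zeta^k)$ factors. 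For Case~(II) I first note that $\xi^{\ell^v}$ has order $(q-1)/\ell^u$, coprime to $\ell$, so $\ell^m$-th powering is a bijection on $\langle\xi^{\ell^u}\rangle$ and the element $\beta$ in the statement is uniquely determined. When $m\le u$ the field $\mathbb F_q$ already contains the primitive $\ell^m$-th root $\alpha=\xi^{(q-1)/\ell^m}$, and $X^{2\ell^m}-\xi^{\ell^m}$ splits directly as $\prod_i(X^2-\xi\alpha^i)$; when $m>u$ the substitution $W=\beta X^2$ converts $X^{2\ell^m}-\xi^{\ell^u}=X^{2\ell^m}-\beta^{-\ell^m}$ into a scalar multiple of $W^{\ell^m}-1$, and pulling the Case~(I) factorization of $W^{\ell^m}-1$ back through $W\mapsto\beta X^2$ yields the polynomials $X^2-\beta^{-1}\zeta^i$ and $X^{2\ell^j}-\beta^{-\ell^j}\zeta^k$ appearing in (II.B). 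For Case~(III) I would write $j=y\ell^z$ with $\gcd(y,\ell)=1$ and $0\le z<v$, introduce the primitive $\ell^z$-th root $\delta=\xi^{(q-1)/\ell^z}\in\mathbb F_q$, and apply the elementary identities $A^{\ell^z}-B^{\ell^z}=\prod_i(A-\delta^iB)$ and (because $\ell^z$ is odd) $A^{\ell^z}+B^{\ell^z}=\prod_i(A+\delta^iB)$. Taking $A=X^{2\ell^{m-z}}$, $B=\xi^y$ yields~(III.A) when $y$ is odd; taking $A=X^{\ell^{m-z}}$, $B=\xi^{y_0}$ and combining the $\pm$ identities yields~(III.B) when $y=2y_0$ is even.

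What then remains is to certify that each displayed binomial factor is monic irreducible over $\mathbb F_q$. For the degree-$1$ or degree-$2$ pieces this is either trivial or a single non-square test; for the higher-degree binomials I would invoke the Vahlen--Capelli theorem, which for an odd prime power $n$ says $X^n-a$ is irreducible over $\mathbb F_q$ iff $a\notin(\mathbb F_q^*)^\ell$, and for $n=2\ell^j$ additionally demands $a\notin(\mathbb F_q^*)^2$; the ``$4\mid n$'' caveat never intervenes since $\ell$ is odd. Each such test reduces to a short exponent-of-$\xi$ bookkeeping exercise using $\ell^u\|q-1$, $2\ell^u\mid q-1$, and $\ell\mid(q-1)/\ell^z$ for $z<u$, which shows that the constants $\zeta^k$, $\beta^{-1}\zeta^i$, $\delta^i\xi^y$, $\delta^i\xi^{y_0}$ and their negatives fail to be squares or $\ell$-th powers precisely where required. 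The main technical hurdle will be Case~(II.B): because $\mathbb F_q$ contains $\ell^u$-th but not $\ell^m$-th roots of unity, $W^{\ell^m}-1$ does not split into linear factors, so one must keep two layers of information separate---the element $\beta\in\mathbb F_q$ on the one hand and the Galois structure on $\langle\zeta\rangle$ on the other---while checking that each pulled-back factor $X^{2\ell^j}-\beta^{-\ell^j}\zeta^k$ really stays irreducible over $\mathbb F_q$ and does not break further through the auxiliary substitution $Y=X^2$.
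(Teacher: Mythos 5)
Your proposal follows essentially the same route as the paper's own proof: the same transversal $\{\xi^{jp^n}\}_{j=0}^{2\ell^v-1}$ supplied by the $n$-equivalence lemma, the same reduction to factoring $X^{2\ell^m}-\xi^{j}$ over $\mathbb F_q$, the same case split on $j=0$, $j=\ell^v$ (with subcases $m\le u$ and $m>u$) and $j=y\ell^z$ (with $y$ odd or even), and the same binomial irreducibility criterion --- your ``Vahlen--Capelli'' test is exactly the paper's invocation of \cite[Theorem 3.75]{Lidl}. The only cosmetic difference is that you rederive the factorization of $X^{\ell^m}-1$ from the cyclotomic-polynomial decomposition and the orders ${\rm ord}_{\ell^r}(q)$, where the paper simply cites \cite[Theorem 3.1]{Chen2}.
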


\begin{proof}
Consider the multiplicative group $\mathbb F_q^*=\langle\xi\rangle$
which is a cyclic group of order $q-1$ generated by $\xi$.
It is easy to check that
$\langle\xi^{2\ell^mp^n}\rangle
=\langle\xi^{2\ell^m}\rangle=\langle\xi^{2\ell^v}\rangle$
and the index $|\mathbb F_q^*:\langle\xi^{2\ell^v}\rangle|=2\ell^v$.
Thus the multiplicative group $\mathbb F_q^*$ is decomposed into
disjoint union of cosets over the subgroup $\langle\xi^{2\ell^v}\rangle$ as follows:
\begin{equation}\label{codets}
\mathbb F_q^*=\langle\xi\rangle=\langle\xi^{2\ell^v}\rangle\cup\xi^{p^n}\langle\xi^{2\ell^v}\rangle\cup\dots\cup\xi^{(2\ell^{v}-1)p^n}\langle\xi^{2\ell^v}\rangle.
\end{equation}
So the element $\lambda$ of $\mathbb F_q^*$ belongs to exactly one
of the cosets, i.e. there is a unique integer $j$ with $0\le j\le 2\ell^v-1$ such that $\lambda\in\xi^{jp^n}\langle\xi^{2\ell^v}\rangle$.
We get that
$\lambda$ is $2\ell^mp^n$-equivalent to $\xi^{jp^n}$.

{\bf Case (I)}: $j=0$, i.e.,  $\lambda$ and $1$ are $2\ell^mp^n$-equivalent.
We have an element $c_1\in \mathbb F_q^*$
such that $c_1^{2\ell^mp^n}\lambda=1$.
It needs to obtain the irreducible factorization of $X^{2\ell^mp^n}-1$ over $\mathbb F_q$.
Obviously,
$$
X^{2\ell^mp^n}-1=(X^{\ell^mp^n}-1)(X^{\ell^mp^n}+1)=(X^{\ell^m}-1)^{p^n}(X^{\ell^m}+1)^{p^n}.
$$
By \cite[Theorem 3.1]{Chen2}, we have the irreducible factorization of  $X^{\ell^m}-1$ over $\mathbb F_q$
as follows (The empty product is taken to be $1$):
$$
X^{\ell^m}-1=\prod\limits_{i=0}^{\ell^v-1}\big(X-\zeta^i\big)\cdot
\prod\limits_{j=1}^{m-u}\prod\limits_{k=1 \atop{ \ell\,\nmid \,k}}^{\ell^v}
 \big(X^{\ell^j}-\zeta^k\big).
$$
Since $\ell$ is odd, we can easily get the irreducible factorization of  $X^{\ell^m}+1$ over $\mathbb F_q$:
$$
X^{\ell^m}+1=\prod\limits_{i=0}^{\ell^v-1}\big(X+\zeta^i\big)\cdot
\prod\limits_{j=1}^{m-u}\prod\limits_{k=1 \atop{ \ell\,\nmid \,k}}^{\ell^v}
 \big(X^{\ell^j}+\zeta^k\big).
$$
Then
$$
X^{2\ell^mp^n}-1=\prod\limits_{i=0}^{\ell^v-1}\big(X-\zeta^i\big)^{p^n}\big(X+\zeta^i\big)^{p^n}\cdot
\prod\limits_{j=1}^{m-u}\prod\limits_{k=1 \atop{ \ell\,\nmid \,k}}^{\ell^v}
\big(X^{\ell^j}-\zeta^k\big)^{p^n}\big(X^{\ell^j}+\zeta^k\big)^{p^n}.
$$
Hence
$$
X^{2\ell^mp^n}-\lambda=\prod\limits_{i=0}^{\ell^v-1}\big(X-c_1^{-1}\zeta^i\big)^{p^n}\big(X+c_1^{-1}\zeta^i\big)^{p^n}\cdot
\prod\limits_{j=1}^{m-u}\prod\limits_{k=1 \atop{ \ell\,\nmid \,k}}^{\ell^v}
\big(X^{\ell^j}-c_1^{-\ell^j}\zeta^k\big)^{p^n}\big(X^{\ell^j}+c_1^{-\ell^j}\zeta^k\big)^{p^n}.
$$
The conclusion (I) holds.

{\bf Case (II)}: $j=\ell^v$.
We have an element $c_2\in \mathbb F_q^*$
such that $c_2^{2\ell^mp^n}\lambda=\xi^{\ell^vp^n}$.
We need to obtain the irreducible factorization of $X^{2\ell^m}-\xi^{\ell^v}$ over $\mathbb F_q$.
There are two subcases, namely $m\leq u$ and $m>u$.

If $m\leq u$ then $m=\min\{m,u\}=v$. We assume that
$\alpha=\xi^{\frac{q-1}{\ell^m}}$ is a primitive $\ell^m$th root of unity in $\mathbb F_q.$
Thus,
$$
X^{2\ell^mp^n}-\xi^{\ell^vp^n}=\big(X^{2\ell^m}-\xi^{\ell^m}\big)^{p^n}=\prod\limits_{i=0}^{\ell^m-1}(X^2-\xi\alpha^i)^{p^n},
$$
gives the irreducible factorization of $X^{2\ell^mp^n}-\xi^{\ell^vp^n}$ over $\mathbb F_q$ (Use \cite[Theorem 3.75]{Lidl}).

Otherwise,  $u=\min\{m,u\}=v$.
Then there exists an element $\beta$ in $\langle \xi^{\ell^u}\rangle$ such that $\beta^{\ell^m}\xi^{\ell^u}=1$. Indeed,
$\psi:~\langle \xi^{\ell^u}\rangle\rightarrow\langle \xi^{\ell^u}\rangle, ~x\mapsto x^{\ell^m},$ is a group automorphism.
This implies that a unique element $\beta\in\langle \xi^{\ell^u}\rangle$ can be found such that $\psi(\beta)=\beta^{\ell^m}=\xi^{-\ell^u}$,
i.e., $\beta^{\ell^m}\xi^{\ell^u}=1$.
In particular, $\beta$ is a primitive $\frac{q-1}{\ell^u}$th root of unity.
We get the irreducible factorization of $X^{\ell^m}-\xi^{\ell^u}$
as follows:
$$
X^{\ell^m}-\xi^{\ell^u}=\prod\limits_{i=0}^{\ell^u-1}\big(X-\beta^{-1}\zeta^i\big)\cdot
\prod\limits_{j=1}^{m-u}\prod\limits_{k=1 \atop{ \ell\,\nmid \,k}}^{\ell^u}
 \big(X^{\ell^j}-\beta^{-\ell^j}\zeta^k\big).
$$
Useing \cite[Theorem 3.75]{Lidl}, it is easily checked  that
$$
X^{2\ell^m}-\xi^{\ell^u}=\prod\limits_{i=0}^{\ell^u-1}\big(X^2-\beta^{-1}\zeta^i\big)\cdot
\prod\limits_{j=1}^{m-u}\prod\limits_{k=1 \atop{ \ell\,\nmid \,k}}^{\ell^u}
 \big(X^{2\ell^j}-\beta^{-\ell^j}\zeta^k\big)
$$
gives the irreducible factorization of $X^{2\ell^m}-\xi^{\ell^u}$ over $\mathbb F_q$.

{\bf Case (III)}: $0<j<2\ell^v$ except $j=\ell^v$.  We can assume that $j=y\ell^z$ with $\gcd(y,\ell)=1$ and $0\le z\le v-1$.
Since $\ell^{v}\,|\,(q-1)$, we see that $\delta=\xi^{(q-1)/\ell^z}$
is a primitive $\ell^z$th root of unity in~$\mathbb F_q$.
Noting that $z<m$, we have
$$\left(\frac{X^{\ell^{m-z}}}{\xi^{y}}\right)^{\ell^z}-1
 =\left(\frac{X^{\ell^{m-z}}}{\xi^{y}}-1\right)
 \left(\frac{X^{\ell^{m-z}}}{\xi^{y}}-\delta\right)\cdots\left(\frac{X^{\ell^{m-z}}}{\xi^{y}}-\delta^{\ell^z-1}\right),$$
hence
$$\left(\frac{X^{\ell^{m-z}}}{\xi^{y}}\right)^{\ell^zp^n}-1
 =\left(\frac{X^{\ell^{m-z}}}{\xi^{y}}-1\right)^{p^n}
 \left(\frac{X^{\ell^{m-z}}}{\xi^{y}}-\delta\right)^{p^n}\cdots\left(\frac{X^{\ell^{m-z}}}{\xi^{y}}-\delta^{\ell^z-1}\right)^{p^n};$$
that is,
$$
X^{\ell^mp^n}-\xi^{y\ell^zp^n}=(X^{\ell^{m-z}}-\xi^{y})^{p^n}
 (X^{\ell^{m-z}}-\delta\xi^{y})^{p^n}
  \cdots(X^{\ell^{m-z}}-\delta^{\ell^z-1}\xi^{y})^{p^n}.
$$
Thus
\begin{equation}\label{irreducible1}
X^{2\ell^mp^n}-\xi^{y\ell^zp^n}=(X^{2\ell^{m-z}}-\xi^{y})^{p^n}
 (X^{2\ell^{m-z}}-\delta\xi^{y})^{p^n}
  \cdots(X^{2\ell^{m-z}}-\delta^{\ell^z-1}\xi^{y})^{p^n}.
\end{equation}
Now we need to give the irreducible factorization of
$X^{2\ell^mp^n}-\xi^{y\ell^zp^n}$ over $\mathbb F_q$. There are two subcases:

\item[{\bf(III.A)}]
The integer $y$ is odd. In this case, we assert that
Equation~(\ref{irreducible1}) gives the irreducible
factorization of
$X^{2\ell^mp^n}-\xi^{y\ell^zp^n}$ over $\mathbb F_q$.
It suffices to check that each polynomial $X^{2\ell^{m-z}}-\delta^i\xi^{y}$
is irreducible over $\mathbb F_q$, $0\leq i\leq\ell^z-1$.
Recall that $\ell\nmid y$,  $\ell^z<\ell^m$ and $\ell^u\Vert{\rm ord}(\xi)$. One can check that
$\ell\mid{\rm ord}(\delta^i\xi^{y})$ and $\ell\nmid\frac{q-1}{{\rm ord}(\delta^i\xi^{y})};$
meanwhile $2\mid{\rm ord}(\delta^i\xi^{y})$ and $2\nmid\frac{q-1}{{\rm ord}(\delta^i\xi^{y})}$.
Using \cite[Theorem 3.75]{Lidl}, we get the desired result.

\item[{\bf(III.B)}]
We are left to investigate the case  $y=2y_0$. Clearly, $X^{2\ell^m}-\xi^{2y_0\ell^z}=(X^{\ell^m}-\xi^{y_0\ell^z})(X^{\ell^m}+\xi^{y_0\ell^z})$.
Hence, we get the irreducible factorization of $X^{2\ell^mp^n}-\xi^{y\ell^zp^n}$ over $\mathbb F_q$:
\begin{equation*}
X^{2\ell^mp^n}-\xi^{y\ell^zp^n}=\prod\limits_{i=0}^{\ell^z-1}
(X^{\ell^{m-z}}-\delta^i\xi^{y_0})^{p^n}
(X^{\ell^{m-z}}+\delta^{i}\xi^{y_0})^{p^n}.
\end{equation*}

\end{proof}

\section{Dual codes}
In this section, the duals of all constacyclic codes of length $2\ell^mp^n$ over $\mathbb F_{q}$ are explicitly obtained,
where $\ell$ is an odd prime different from $p$.
Among other results, all  linear complementary-dual (LCD) cyclic and
negacylic codes  are provided; all self-dual negacyclic codes of this length are also determined.

We give our results according to Theorem~\ref{thm-cyclic} and Theorem~\ref{thm5}.
The next two results give the structures of the duals of all constacyclic codes
of length $2\ell^mp^n$ over $\mathbb F_q$.

\begin{Corollary}\label{forLCD1}
With the notation of Theorem~\ref{thm-cyclic}, we have that

\item[{\bf (A)}]   $\lambda\in\langle \xi^2\rangle$.
If  $C$ is a $\lambda$-constacyclic code
presenting in Theorem~\ref{thm-cyclic}~{\rm (A)},
then its dual code is the $\lambda^{-1}$-constacyclic code given by
$$
C^\perp=\left\langle \prod\limits_{i=0}^{e}\hat M_{-\rho_i}(a^{-1}X)^{p^n-\varepsilon_i}\hat M_{-\rho_i}(-a^{-1}X)^{p^n-\epsilon_i}\right\rangle;
$$

\item[{\bf (B1)}]   $\lambda\notin\langle \xi^2\rangle$ and $f$ is odd.
If  $C$ is a $\lambda$-constacyclic code
presenting in Theorem~\ref{thm-cyclic}~{\rm (B1)},
then its dual code is the $\lambda^{-1}$-constacyclic code given by
$$
C^\perp=\left\langle \prod\limits_{i=0}^{e}\hat S_{-i}(b^{-1}X)^{p^n-\varepsilon_i}\right\rangle,
$$
where  $S_{-i}(X)=\hat M_{-\rho_i}(\beta_1^{-1}X)\hat M_{-\rho_i}(-\beta_1^{-1}X)$
 for each $0\leq i\leq e$;

\item[{\bf (B2)}]   $\lambda\notin\langle \xi^2\rangle$ and $f$ is even.
If  $C$ is a $\lambda$-constacyclic code
presenting in Theorem~\ref{thm-cyclic}~{\rm (B2)},
then its dual code is the $\lambda^{-1}$-constacyclic code given by
$$
C^\perp=\left\langle
\hat P^*(b^{-1}X)^{p^n-\varepsilon}\prod\limits_{i=0}^{e}\hat Q_{-i}(b^{-1}X)^{p^n-\varepsilon_i}\hat R_{-i}(b^{-1}X)^{p^n-\epsilon_i}
\right\rangle,
$$
\small{
where  $P^*(X)=(X-\beta_1)(X+\beta_1)$, $Q_{-i}(X)=\hat N_{-\rho_i}(\beta_1^{-1}X)\hat N_{-\rho_iq}(-\beta_1^{-1}X)$
and
$R_{-i}(X)=\hat N_{-\rho_iq}(\beta_1^{-1}X)\hat N_{-\rho_i}(-\beta_1^{-1}X)$  for each $0\leq i\leq e$.}
\end{Corollary}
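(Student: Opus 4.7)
The plan is to invoke the standard fact recalled in Section~2 that if $C = \langle g(X)\rangle$ is a $\lambda$-constacyclic code of length $N = 2\ell^m p^n$ with cofactor $h(X) = (X^N-\lambda)/g(X)$, then $C^\perp$ is the $\lambda^{-1}$-constacyclic code generated by the monic reciprocal $h^*(X)$. Substituting the explicit irreducible factorization of $X^N-\lambda$ obtained in Theorem~\ref{thm-cyclic} shows that in every case $h(X)$ is got from $g(X)$ by replacing each exponent $\varepsilon_i$ (respectively $\epsilon_i$, $\varepsilon$) by its complement $p^n-\varepsilon_i$; the task then reduces to computing the reciprocal of each irreducible factor that appears.

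The key identity I would prove first is: for any nonzero scalar $c$ (in $\mathbb{F}_q$ or in $\mathbb{F}_{q^2}$ as needed) and any cyclotomic representative $\rho_i$, the monic reciprocal of $\hat M_{\rho_i}(cX)$ is $\hat M_{-\rho_i}(c^{-1}X)$. This is a routine root-chase: the roots of $\hat M_{\rho_i}(cX)$ are $\{c^{-1}\eta^s : s \in C_{\rho_i}\}$, so their multiplicative inverses are $\{c\eta^t : t \in -C_{\rho_i}\} = \{c\eta^t : t \in C_{-\rho_i}\}$, which are exactly the roots of $\hat M_{-\rho_i}(c^{-1}X)$. The same computation applied to $\hat M_{\rho_i}(-cX)$ produces $\hat M_{-\rho_i}(-c^{-1}X)$. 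In the $f$-even case I would prove the analogue with $\hat N_{\rho_i}, \hat N_{\rho_iq}$ in place of $\hat M_{\rho_i}$, using that negation $s \mapsto -s$ maps the $q^2$-orbit $D_{\rho_i}$ onto $D_{-\rho_i}$ and $D_{\rho_iq}$ onto $D_{-\rho_iq}$ by definition of these orbits.

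With these identities in hand, and using multiplicativity of reciprocation on pairwise coprime monic polynomials (the ``hat'' normalization absorbing any nonzero leading constants), each of (A), (B1), (B2) follows by direct substitution. For (A), reciprocating $\prod_i \hat M_{\rho_i}(aX)^{p^n-\varepsilon_i}\hat M_{\rho_i}(-aX)^{p^n-\epsilon_i}$ termwise yields the stated $C^\perp$. For (B1), each block $S_i(X)=\hat M_{\rho_i}(\beta_1 X)\hat M_{\rho_i}(-\beta_1 X)$ reciprocates to $S_{-i}(X)=\hat M_{-\rho_i}(\beta_1^{-1}X)\hat M_{-\rho_i}(-\beta_1^{-1}X)$. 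For (B2), the reciprocal of $\hat Q_i(bX)=\hat N_{\rho_i}(\beta_1 bX)\hat N_{\rho_iq}(-\beta_1 bX)$ is identified as $\hat Q_{-i}(b^{-1}X)$, similarly for $R_i$, and the reciprocal of $\hat P(bX)$ is $\hat P^*(b^{-1}X)$ by direct inspection of its two roots $\pm b^{-1}\beta_1^{-1}$.

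The main obstacle is not conceptual but notational: keeping the scalars $a$, $b$, $\beta_1$ straight, tracking the sign flips arising from $\beta_1^q=-\beta_1$, and correctly interfacing the $q$-cyclotomic cosets $C_{\rho_i}$ with the $q^2$-cyclotomic cosets $D_{\rho_i}, D_{\rho_iq}$ under $s \mapsto -s$ in case (B2). Once the single identity ``reciprocal of $\hat M_{\rho_i}(cX)$ equals $\hat M_{-\rho_i}(c^{-1}X)$'' (together with its obvious $N$-analogue) is established, the rest of the corollary is a symbolic bookkeeping exercise.
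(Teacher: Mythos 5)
Your proposal is correct and follows essentially the same route as the paper: the paper likewise writes $h(X)=(X^{2\ell^mp^n}-\lambda)/g(X)$ from the explicit factorization in Theorem~\ref{thm-cyclic}, takes $h^*(X)$ as the generator of $C^\perp$, and reciprocates factor by factor (it only writes out case (A) and declares the rest similar). Your root-chase identity $\bigl(\hat M_{\rho_i}(cX)\bigr)^{*}=\hat M_{-\rho_i}(c^{-1}X)$ and its $\hat N$-analogue is exactly the bookkeeping the paper leaves implicit.
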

\begin{proof}
We just give a proof for (A), the other cases can be proved similarly.
As shown in the proof of Theorem~\ref{thm-cyclic}, the monic irreducible factorization of $X^{2\ell^mp^n}-\lambda$ over
$\mathbb{F}_q$ is given by
$$
X^{2\ell^mp^n}-\lambda=\prod\limits_{i=0}^{e}\hat M_{\rho_i}(aX)^{p^n}\hat M_{\rho_i}(-aX)^{p^n}.
$$
Then
$$
h(X)=\frac{X^{2\ell^mp^n}-\lambda}{g(X)}=\prod\limits_{i=0}^{e}\hat M_{\rho_i}(aX)^{p^n-\varepsilon_i}\hat M_{\rho_i}(-aX)^{p^n-\epsilon_i}.
$$
It follows that $C^{\perp}$ has generator polynomial
$$
h^*(X)=\prod\limits_{i=0}^{e}\hat M_{-\rho_i}(a^{-1}X)^{p^n-\varepsilon_i}\hat M_{-\rho_i}(-a^{-1}X)^{p^n-\epsilon_i}.
$$
\end{proof}

The next result is a direct consequence of
Theorem \ref{thm5}, so we omit its proof here.

\begin{Corollary}
With the notation of Theorem \ref{thm5}, we have that

\item[{\bf (I)}]~If $C$ is a $\lambda$-constacyclic code
given as in Theorem~\ref{thm5}~{\rm (I)},
then its dual code is the $\lambda^{-1}$-constacyclic code given by
\begin{equation*}
C^\perp=\left\langle\prod\limits_{i=0}^{\ell^v-1}\big(X-c_1\zeta^{-i}\big)^{p^n-\varepsilon_i}\big(X+c_1\zeta^{-i}\big)^{p^n-\epsilon_i}\cdot
\prod\limits_{j=1}^{m-u}\prod\limits_{k=1 \atop{ \ell\,\nmid \,k}}^{\ell^v}
\big(X^{\ell^j}-c_1^{\ell^j}\zeta^{-k}\big)^{p^n-\tau^j_k}\big(X^{\ell^j}+c_1^{\ell^j}\zeta^{-k}\big)^{p^n-\sigma_k^j}\right\rangle.
\end{equation*}

\item[{\bf (II.A)}] If $C$ is a $\lambda$-constacyclic code
given as in Theorem~\ref{thm5}~{\rm (II.A)},
then its dual code is the $\lambda^{-1}$-constacyclic code given by
$$ C^\perp=
\huge\left\langle\prod\limits_{i=0}^{\ell^m-1}(X^2-c_2^{2}\xi^{-1}\alpha^{-i})^{p^n-\varepsilon_i}
 \huge\right\rangle.
$$

\item[{\bf (II.B)}] If $C$ is a $\lambda$-constacyclic code
given as in Theorem~\ref{thm5}~{\rm (II.B)},
then its dual code  is the $\lambda^{-1}$-constacyclic code given by
$$ C^\perp=
\huge\left\langle\prod\limits_{i=0}^{\ell^u-1}\big(X^2-c_2^{2}\beta\zeta^{-i}\big)^{p^n-\varepsilon_i}\cdot
\prod\limits_{j=1}^{m-u}\prod\limits_{k=1 \atop{ \ell\,\nmid \,k}}^{\ell^u}
 \big(X^{2\ell^j}-c_2^{2\ell^j}\beta^{\ell^j}\zeta^{-k}\big)^{p^n-\sigma_k^j}
 \huge\right\rangle.
$$

\item[{\bf (III.A)}] If $C$ is a $\lambda$-constacyclic code
given  as in Theorem~\ref{thm5}~{\rm (III.A)},
then its dual code  is the $\lambda^{-1}$-constacyclic code given by
$$ C^\perp=
\huge\left\langle\prod\limits_{i=0}^{\ell^z-1}
 (X^{2\ell^{m-z}}-d_1^{2\ell^{m-z}}\delta^{-i}\xi^{-y})^{p^n-\varepsilon_i}
 \huge\right\rangle.
$$

\item[{\bf (III.B)}] If $C$ is a $\lambda$-constacyclic code
given as in Theorem~\ref{thm5}~{\rm (III.B)},
then its dual code is the $\lambda^{-1}$-constacyclic code given by
$$ C^\perp=
\huge\left\langle\prod\limits_{i=0}^{\ell^z-1}
(X^{\ell^{m-z}}-d_1^{\ell^{m-z}}\delta^{-i}\xi^{-y_0})^{p^n-\varepsilon_i}
(X^{\ell^{m-z}}+d_1^{\ell^{m-z}}\delta^{-i}\xi^{-y_0})^{p^n-\epsilon_i}
 \huge\right\rangle.
$$

\end{Corollary}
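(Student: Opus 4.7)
The plan is to verify the Corollary by a direct reciprocal-polynomial computation using the factorizations of $X^{2\ell^m p^n} - \lambda$ that were established in the proof of Theorem~\ref{thm5}. Recall from Section 2 that $C^\perp = \langle h^*(X)\rangle$, where $h(X) = (X^{2\ell^m p^n} - \lambda)/g(X)$ and $h^*$ is the monic reciprocal polynomial. Two elementary facts drive the argument: (i) the reciprocal map is multiplicative, i.e.\ $(f_1 f_2)^* = f_1^* f_2^*$ for polynomials with nonzero constant term; and (ii) for any nonzero $c \in \mathbb{F}_q$ and any positive integer $a$, a direct computation gives $(X^a - c)^* = X^a - c^{-1}$ and $(X^a + c)^* = X^a + c^{-1}$, both already monic. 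Thus reciprocation simply inverts the constant added to or subtracted from the monomial $X^a$, while preserving the exponent of that factor.

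In each case (I), (II.A), (II.B), (III.A), (III.B) of Theorem~\ref{thm5}, I would first recall the complete monic irreducible factorization of $X^{2\ell^m p^n} - \lambda$ already written down there. Since every distinct irreducible factor occurs with multiplicity exactly $p^n$, dividing by $g(X)$ yields $h(X)$ expressed as the same product as $g(X)$, but with each multiplicity $\varepsilon_i$, $\epsilon_i$, $\tau_k^j$, or $\sigma_k^j$ replaced by its complement $p^n - \varepsilon_i$, etc. Applying the monic reciprocal factor by factor via (i) and (ii) then reads off the Corollary directly. For example, in case (I) one obtains
\[
(X - c_1^{-1}\zeta^i)^* = X - c_1\zeta^{-i}, \qquad (X^{\ell^j} + c_1^{-\ell^j}\zeta^k)^* = X^{\ell^j} + c_1^{\ell^j}\zeta^{-k};
\]
in case (II.A), $(X^2 - c_2^{-2}\xi\alpha^i)^* = X^2 - c_2^2\xi^{-1}\alpha^{-i}$; and the remaining cases are entirely analogous.

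I expect essentially no conceptual obstacle here, only bookkeeping: one must be careful with the signs in the negacyclic-type factors $X^a + c$ appearing in cases (I) and (III.B), and with the reindexing $\zeta^i \leftrightarrow \zeta^{-i}$ (equivalently $i \leftrightarrow \ell^v - i$) that arises when inverting a root of unity. Since the complete distinct monic irreducible factorization of $X^{2\ell^m p^n} - \lambda$ is already in hand from Theorem~\ref{thm5} and the factorization is squarefree apart from the uniform exponent $p^n$, no additional combinatorial subtlety remains, which is exactly why the Corollary may legitimately be stated as a direct consequence of Theorem~\ref{thm5}.
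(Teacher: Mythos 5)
Your proposal is correct and follows exactly the route the paper intends: the paper omits the proof as a ``direct consequence'' of Theorem~\ref{thm5}, and its parallel result (Corollary~\ref{forLCD1}) is proved by precisely your computation of $h(X)=(X^{2\ell^mp^n}-\lambda)/g(X)$ followed by factorwise application of the monic reciprocal, using $(X^a\pm c)^*=X^a\pm c^{-1}$. The only remark is that no actual reindexing $i\leftrightarrow \ell^v-i$ is needed, since the Corollary's statement already carries the inverted roots of unity as $\zeta^{-i}$, $\alpha^{-i}$, $\delta^{-i}$.
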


We devote the rest of this section to apply our results  to investigate the situations of  linear complimentary-dual (LCD) codes
and self-dual codes. These are the two extreme connections between $C$ and $C^{\perp}$, where $C\bigcap C^{\perp}=\{0\}$ (for LCD codes) and
$C=C^{\perp}$ (for self-dual codes). The concept of LCD codes was introduced by Massey \cite{M92} in 1992. In the same paper, he showed that asymptotically good
LCD codes exist, and presented applications of LCD codes such as they provided  an optimum linear coding
solution for the two-user binary adder channel. It was proven by Sendrier \cite{S04} that LCD codes meet the Gilbert-Varshamov bound. Necessary and sufficient conditions for cyclic codes \cite{YM94} and certain class of quasi-cyclic codes \cite{EY09} to be LCD codes were obtained.

For the case of LCD constacyclic codes, it was shown that any $\lambda$-constacyclic
code with $\lambda \not\in \{-1,1\}$ is a LCD code (\cite{Dinh2014}).
So in order to obtain all LCD $\lambda$-constacyclic codes, we only need to work on cyclic and negacyclic codes.

Recall that ${\rm ord}_{\ell}(q)=f$, the multiplicative order of $q$ in $\mathbb{Z}_{\ell}^*$.
Also recall that ${\rm ord}_{\ell^{r}}(q)=\lambda(r)$ and $\delta(r)=\frac{\phi(\ell^{r})}{\lambda(r)}$,
$1\leq r\leq m$.
We have to distinguish the cases when $f$ is  odd or even.
If $f={\rm ord}_{\ell}(q)$ is even, it has been shown that
the monic irreducible factors of $X^{\ell^m}-1$ are self-reciprocal (e.g. see \cite[Theorem 1]{Kathuria}).
The next lemma is concerned with the case  when $f$ is odd, in which case it shows that all the irreducible factors
of $X^{\ell^m}-1$ are not self-reciprocal except the trivial factor $X-1$.

\begin{lem}\label{lem2}
Let $\ell$ be an odd prime relatively prime to  $q$.  Assume further that $f={\rm ord}_{\ell}(q)$ is odd.
If $g$ is a fixed generator of the cyclic group $\mathbb{Z}_{\ell^m}^*$, then
all the distinct $q$-cyclictomic cosets modulo $\ell^m$ are given by
$C_0=\{0\}$,
$$
C_{\ell^{m-r}g^k}=\Big\{\ell^{m-r}g^k,\ell^{m-r}g^kq,\cdots,\ell^{m-r}g^kq^{\lambda(r)-1}\Big\},
$$
$$
C_{-\ell^{m-r}g^k}=\Big\{-\ell^{m-r}g^k, -\ell^{m-r}g^kq,\cdots,-\ell^{m-r}g^kq^{\lambda(r)-1}\Big\},
$$
where $1\leq r\leq m$, $0\leq k\leq\frac{\delta(r)}{2}-1$.
\end{lem}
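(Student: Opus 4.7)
The plan is to refine the already-known enumeration of $q$-cyclotomic cosets modulo $\ell^m$ (which listed $C_0$ and $C_{\ell^{m-r}g^k}$ for $0\le k\le \delta(r)-1$, $1\le r\le m$) into the $\pm$-symmetric form stated. First, I would observe that under the hypothesis that $f={\rm ord}_\ell(q)$ is odd, the integer $\lambda(r)=f\cdot\ell^{\max(r-s,0)}$ is odd for every $r$ (as both $f$ and $\ell$ are odd), while $\phi(\ell^r)=\ell^{r-1}(\ell-1)$ is even; hence $\delta(r)=\phi(\ell^r)/\lambda(r)$ is even, so the upper bound $k\le \delta(r)/2-1$ is a non-negative integer and the counting works out to the right total of $\delta(r)$ cosets for each $r$.

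Next, since $g$ is a primitive root modulo $\ell^m$, it is automatically a primitive root modulo $\ell^r$ for each $1\le r\le m$, so $\mathbb{Z}_{\ell^r}^{*}=\langle g\rangle$ and we may write $-1=g^{\phi(\ell^r)/2}$ in $\mathbb{Z}_{\ell^r}^{*}$. I would then transfer the problem from cyclotomic cosets in $\mathbb{Z}_{\ell^m}$ to ordinary cosets in $\mathbb{Z}_{\ell^r}^{*}/\langle q\rangle$ via the identity $C_{\ell^{m-r}t}=\{\ell^{m-r}(tq^i\bmod\ell^r):0\le i<\lambda(r)\}$, valid for $\gcd(t,\ell)=1$. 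Consequently, two cosets $C_{\ell^{m-r}t}$ and $C_{\ell^{m-r}t'}$ coincide precisely when $t\langle q\rangle=t'\langle q\rangle$ in $\mathbb{Z}_{\ell^r}^{*}$, and in particular $C_{-\ell^{m-r}g^k}$ corresponds to the coset $(-g^k)\langle q\rangle$.

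The core step is then a one-line computation. Writing $\lambda(r)=2a+1$, we get $\phi(\ell^r)/2=\delta(r)a+\delta(r)/2$, so $\phi(\ell^r)/2\equiv \delta(r)/2\pmod{\delta(r)}$; therefore $-g^k\langle q\rangle=g^{k+\delta(r)/2}\langle q\rangle$ (viewing exponents modulo $\delta(r)$). Combining this with the fact that $g^0\langle q\rangle,\ldots,g^{\delta(r)-1}\langle q\rangle$ are exactly the $\delta(r)$ distinct cosets of $\langle q\rangle$ in $\mathbb{Z}_{\ell^r}^{*}$, I conclude that the $\delta(r)$ classes $g^k\langle q\rangle$ and $(-g^k)\langle q\rangle$ for $0\le k\le \delta(r)/2-1$ are pairwise distinct and exhaust $\mathbb{Z}_{\ell^r}^{*}/\langle q\rangle$. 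Pulling back through the identification and appending $C_0=\{0\}$ then gives exactly the list in the statement, and each listed orbit has cardinality $\lambda(r)={\rm ord}_{\ell^r}(q)$ as required.

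The only real subtlety I expect is the exponent computation in the third step — specifically, checking that shifting by $\delta(r)/2$ matches the action of multiplication by $-1$ on the coset space. This is precisely where the oddness of $f$ (equivalently, of $\lambda(r)$) enters: if $\lambda(r)$ were even, then $\phi(\ell^r)/2$ would be a multiple of $\delta(r)$, $-1$ would already sit inside $\langle q\rangle$, and the cosets $C_{\ell^{m-r}g^k}$ and $C_{-\ell^{m-r}g^k}$ would coincide, destroying the proposed enumeration (this is consistent with the already-noted self-reciprocal behavior in the even-$f$ case).
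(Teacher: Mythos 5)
Your proof is correct, and it takes a somewhat different route from the paper's. The paper argues by contradiction at the level of congruences: supposing $C_{z\ell^{m-r_1}g^{k_1}}=C_{\ell^{m-r_2}g^{k_2}}$ with $z\in\{1,-1\}$, it first forces $r_1=r_2$ via $\gcd$ with $\ell^m$, then squares the resulting relation $zg^{k_1}\equiv g^{k_2}q^j\pmod{\ell^{r_1}}$ to get $2\lambda(r_1)(k_1-k_2)\equiv 0\pmod{\phi(\ell^{r_1})}$, concludes $k_1=k_2$, and finally uses the oddness of $\lambda(r_1)$ to rule out $z=-1$; exhaustiveness is then obtained by a separate cardinality count summing to $\ell^m$. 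You instead pass to the quotient group $\mathbb{Z}_{\ell^r}^{*}/\langle q\rangle$, locate $-1$ as $g^{\phi(\ell^r)/2}$, and compute $\phi(\ell^r)/2\equiv\delta(r)/2\pmod{\delta(r)}$ when $\lambda(r)$ is odd, so that negation acts as the shift by $\delta(r)/2$ on the $\delta(r)$ cosets $g^0\langle q\rangle,\dots,g^{\delta(r)-1}\langle q\rangle$. The two arguments use the same essential fact (the position of $-1$ relative to $\langle q\rangle$ inside $\langle g\rangle$), but your packaging buys you distinctness and exhaustiveness in one stroke, with no separate counting step, and it isolates cleanly why the even-$f$ case behaves oppositely ($-1\in\langle q\rangle$, matching the self-reciprocal factors cited from Kathuria--Raka). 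The only things worth making explicit in a write-up are the cross-$r$ distinctness of cosets (immediate from the $\ell$-adic valuation, which the paper handles via the $\gcd$ step) and the justification that a generator of $\mathbb{Z}_{\ell^m}^{*}$ reduces to one of $\mathbb{Z}_{\ell^r}^{*}$; both are routine.
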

\begin{proof}
We first claim that the cyclotomic cosets $C_{\ell^{m-r}g^k}$ and $C_{-\ell^{m-r}g^k}$
with $1\leq r\leq m$, $0\leq k\leq\frac{\delta(r)}{2}-1$ are distinct from each other.
Suppose otherwise that $C_{z\ell^{m-r_1}g^{k_1}}=C_{\ell^{m-r_2}g^{k_2}}$ for some
$1\leq r_1, r_2\leq m$, $0\leq k_1, k_2\leq\frac{\delta(r)}{2}-1$ and $z\in\{1,-1\}$.
Then there exists some integer $j$ such that
\begin{equation}\label{4-1}
z\ell^{m-r_1}g^{k_1}\equiv\ell^{m-r_2}g^{k_2}q^j~(\bmod~\ell^m).
\end{equation}
It follows that $\gcd(z\ell^{m-r_1}g^{k_1}, \ell^m)=\gcd(\ell^{m-r_2}g^{k_2}q^j,\ell^m)$, which forces $r_1=r_2$.
Therefore,
$
zg^{k_1}\equiv g^{k_2}q^j~(\bmod~\ell^{r_1}),
$
which gives
$
g^{2\lambda(r_1)k_1}\equiv g^{2\lambda(r_1)k_2}~(\bmod~\ell^{r_1}).
$
We get $2\lambda(r_1)(k_1-k_2)\equiv0~(\bmod~\phi(\ell^{r_1}))$,  since $g$ is of
order $\phi(\ell^{r_1})$ in $\mathbb{Z}_{\ell^{r_1}}^*$.  Hence, $k_1=k_2$.
Then (\ref{4-1}) gives
$
z\equiv q^{j}~(\bmod~\ell^{r_1}).
$
Using ${\rm ord}_{\ell^{r_1}}(q)=\lambda(r_1)$ again, we have that $\lambda(r_1)$ divides $2j$.
Since $\lambda(r_1)$ is odd ($f$ and $\ell$ are all the divisors of $\lambda(r_1)$), we get $\lambda(r_1)$ divides $j$. This leads to
$
1\equiv q^{j}~(\bmod~\ell^{r_1}).
$
We then see that $z=1$, as  claimed.

Finally,
$$
|C_0|+\sum_{r=1}^{m}\sum_{k=0}^{\frac{\delta(r)}{2}-1}(|C_{\ell^{m-r}g^k}|+|C_{-\ell^{m-r}g^k}|)
=1+\sum_{r=1}^{m}\sum_{k=0}^{\frac{\delta(r)}{2}-1}2\lambda(r)=1+\sum_{r=1}^{m}\lambda(r)\delta(r)=\ell^m.
$$
This completes the proof.
\end{proof}

Assuming that ${\rm ord}_{\ell}(q)=f$ is odd, by Lemma~\ref{lem2},
\begin{equation}\label{another-factorization}
X^{\ell^m}-1=(X-1)\prod\limits_{r=1}^{m}\prod\limits_{k=0}^{\frac{\delta(r)}{2}-1}
M_{\ell^{m-r}g^k}(X)M_{-\ell^{m-r}g^k}(X)
\end{equation}
gives the irreducible factorization of $X^{\ell^m}-1$ over $\mathbb F_q$.
Clearly, in this case, $(X-1)^*=X-1$ and $M_{\ell^{m-r}g^k}^*(X)=M_{-\ell^{m-r}g^k}(X)$ for each
$1\leq r\leq m$, $0\leq k\leq\frac{\delta(r)}{2}-1$.

The next result characterizes all LCD cyclic codes of length $2\ell^mp^n$ over $\mathbb{F}_q$.

\begin{Theorem}\label{LCDCYCLIC}
Let $\ell$ be an odd prime different from $p$. The following statements hold:

\item[{(i)}]If $f={\rm ord}_{\ell}(q)$ is odd, then there are exactly $2^{e+2}$ LCD cyclic codes of length $2\ell^mp^n$
over $\mathbb F_q$ generated by
$$(X-1)^{\varepsilon_0}(X+1)^{\epsilon_0}
\prod\limits_{r=1}^{m}\prod\limits_{k=0}^{\frac{\delta(r)}{2}-1}
M_{\ell^{m-r}g^k}(X)^{\varepsilon^r_k}M_{-\ell^{m-r}g^k}(X)^{\varepsilon^r_k}
\hat M_{\ell^{m-r}g^k}(-X)^{\sigma^r_k}\hat M_{-\ell^{m-r}g^k}(-X)^{\sigma^r_k},
$$
where $\varepsilon_0, \epsilon_0, \varepsilon^r_k, \sigma^r_k\in \{0, p^n\} $,
for every $1\leq r\leq m$
and $0\leq k\leq \frac{\delta(r)}{2}-1$;

\item[{(ii)}]if $f={\rm ord}_{\ell}(q)$ is even, then there are exactly $2^{2(e+1)}$ LCD cyclic codes of length $2\ell^mp^n$
over $\mathbb F_q$  generated by
$$\prod\limits_{i=0}^{e}\hat M_{\rho_i}(X)^{\varepsilon_i}\hat M_{\rho_i}(-X)^{\epsilon_i},
 \qquad \varepsilon_i, \epsilon_i\in \{0, p^n\},~~i=0,1,\cdots,e.
$$
\end{Theorem}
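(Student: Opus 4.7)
The plan is to reduce the theorem to the standard criterion that a cyclic code $C = \langle g(X) \rangle$ of length $n$ is LCD exactly when $\mathrm{lcm}(g(X), h^*(X)) = X^n - 1$, where $h(X) = (X^n - 1)/g(X)$, and then count. First I would unpack this into a multiplicity condition. Group the monic irreducible factors of $X^n - 1$ (with $n = 2\ell^m p^n$) into self-reciprocal ones $p$ and reciprocal pairs $\{p, p^*\}$ with $p \neq p^*$; each such factor appears in $X^n - 1$ with common exponent $p^n$, since $X^{2\ell^m p^n} - 1 = (X^{2\ell^m} - 1)^{p^n}$ and $X^{2\ell^m} - 1$ is separable. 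Writing $g = \prod p_j^{a_j}$ and reading off the exponents of $h^* = \prod (p_j^*)^{p^n - a_j}$, a short computation shows that $\mathrm{lcm}(g, h^*) = X^n - 1$ is equivalent to (a) for every self-reciprocal factor $p$, the exponent $a$ of $p$ in $g$ satisfies $a \in \{0, p^n\}$, and (b) for every reciprocal pair $\{p, p^*\}$ with exponents $(a, b)$ in $g$, the system $\max(a, p^n - b) = p^n = \max(b, p^n - a)$ forces $(a,b) \in \{(0,0), (p^n, p^n)\}$. Thus every self-reciprocal factor and every reciprocal pair contributes exactly two admissible configurations, so the total number of LCD cyclic codes is $2^{s+t}$, where $s$ is the number of self-reciprocal irreducible factors and $t$ is the number of reciprocal pairs.

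Next I would determine $s$ and $t$ for $X^{2\ell^m p^n} - 1 = (X^{\ell^m} - 1)^{p^n}(X^{\ell^m} + 1)^{p^n}$ in each parity case of $f = \mathrm{ord}_\ell(q)$. The key ancillary identity, obtained by inverting and negating roots, is $\hat M_\rho(-X)^* = \hat M_{-\rho}(-X)$ for every $\rho$. When $f$ is even, the factors $M_{\rho_i}(X)$ are self-reciprocal by the cited result from \cite{Kathuria}, so the identity above forces every $\hat M_{\rho_i}(-X)$ to be self-reciprocal as well; hence $s = 2(e+1)$ and $t = 0$, giving $2^{2(e+1)}$ LCD codes, which is part (ii). When $f$ is odd, Lemma~\ref{lem2} refines (\ref{simple-irr-decomposition}) into (\ref{another-factorization}): the unique self-reciprocal factor of $X^{\ell^m} - 1$ is $X - 1$, while the remaining factors form $e/2$ reciprocal pairs $\{M_{\ell^{m-r}g^k}(X), M_{-\ell^{m-r}g^k}(X)\}$. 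Substituting $-X$ and using the identity above produces the matching structure for $X^{\ell^m} + 1$: one self-reciprocal factor $X + 1$ together with $e/2$ reciprocal pairs $\{\hat M_{\ell^{m-r}g^k}(-X), \hat M_{-\ell^{m-r}g^k}(-X)\}$. Combining, $s = 2$ and $t = e$, so there are $2^{e+2}$ LCD codes, proving part (i).

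It remains to read off the generator polynomials explicitly. Assigning a common exponent $\varepsilon^r_k \in \{0, p^n\}$ to each reciprocal pair from $X^{\ell^m} - 1$ and $\sigma^r_k \in \{0, p^n\}$ to each reciprocal pair from $X^{\ell^m} + 1$, together with exponents $\varepsilon_0, \epsilon_0 \in \{0, p^n\}$ on $X - 1$ and $X + 1$, reproduces the product in part (i); the analogous reading in the even case, with independent exponents $\varepsilon_i, \epsilon_i \in \{0, p^n\}$ on the self-reciprocal factors $\hat M_{\rho_i}(X)$ and $\hat M_{\rho_i}(-X)$, reproduces the product in part (ii). The main obstacle I anticipate is the bookkeeping around the identity $\hat M_\rho(-X)^* = \hat M_{-\rho}(-X)$ and the verification that the induced pairing is actually proper (i.e., when $f$ is odd, $\rho \neq -\rho$ modulo $q$-cyclotomic cosets for $\rho \neq 0$, so that paired factors are indeed distinct and contribute one binary choice rather than two); once this is settled, the rest is a direct substitution of the factorisations from Section~2 and Lemma~\ref{lem2} into the multiplicity criterion of Step~1.
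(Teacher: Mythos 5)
Your proposal is correct and follows essentially the same route as the paper: the paper likewise writes the general generator polynomial, forms the dual's generator via $h^*$, and reduces $C\cap C^\perp=\{0\}$ to the conditions $\max\{a,p^n-b\}=p^n=\max\{b,p^n-a\}$ on exponents of reciprocal pairs (and $a\in\{0,p^n\}$ on self-reciprocal factors), using Lemma~\ref{lem2} for the odd-$f$ pairing and the self-reciprocality of the $M_{\rho_i}$ from \cite{Kathuria} for the even case. Your reformulation via $\mathrm{lcm}(g,h^*)=X^n-1$ and the $2^{s+t}$ count is just a mild abstraction of the paper's factor-by-factor computation.
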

\begin{proof}
We just give a proof for ${\rm (i)}$, since the proof for ${\rm (ii)}$ is similar.
We get the desired result by computing the intersection of $C$
and $C^\perp$.  Form Lemma~\ref{lem2} and (\ref{another-factorization}),
we can assume that $C$ is a cyclic code of length $2\ell^mp^n$ over $\mathbb F_q$
generated by
$$(X-1)^{\varepsilon_0}(X+1)^{\epsilon_0}
\prod\limits_{r=1}^{m}\prod\limits_{k=0}^{\frac{\delta(r)}{2}-1}
M_{\ell^{m-r}g^k}(X)^{\varepsilon^r_k}M_{-\ell^{m-r}g^k}(X)^{\epsilon^r_k}
\hat M_{\ell^{m-r}g^k}(-X)^{\sigma^r_k}\hat M_{-\ell^{m-r}g^k}(-X)^{\tau^r_k},
$$
where $0\le \varepsilon_0, \epsilon_0\le p^n$,
$0\leq\varepsilon^r_k, \epsilon^r_k, \sigma^r_k, \tau^r_k\le p^n$ for every $1\leq r\leq m$
and $0\leq k\leq \frac{\delta(r)}{2}-1$.
Then its dual code $C^\perp$ has generator polynomial
\small{
\begin{equation*}
(X-1)^{p^n-\varepsilon_0}(X+1)^{p^n-\epsilon_0}\prod\limits_{r=1}^{m}\prod\limits_{k=0}^{\frac{\delta(r)}{2}-1}
M_{-\ell^{m-r}g^k}(X)^{p^n-\varepsilon^r_k}M_{\ell^{m-r}g^k}(X)^{p^n-\epsilon^r_k}
\hat M_{-\ell^{m-r}g^k}(-X)^{p^n-\sigma^r_k}\hat M_{\ell^{m-r}g^k}(-X)^{p^n-\tau^r_k}.
\end{equation*}}
\normalsize
Thus, $C\bigcap C^\perp=\{0\}$ if and only if
\begin{equation*}
\begin{split}
p^n&=\max\{\varepsilon_0, p^n-\varepsilon_0\}=\max\{\epsilon_0, p^n-\epsilon_0\}\\
&=\max \{\varepsilon_{k}^r,p^n-\epsilon^r_k\}=
\max \{\epsilon_{k}^r,p^n-\varepsilon^r_k\}\\
&=\max \{\sigma_{k}^r,p^n-\tau^r_k\}=\max \{\tau_{k}^r,p^n-\sigma^r_k\},
\end{split}
\end{equation*}
for every $1\leq r\leq m$
and $0\leq k\leq \frac{\delta(r)}{2}-1$,
which is equivalent to
$$
\varepsilon_0, \epsilon_0\in\{0, p^n\}, ~~~~\varepsilon_{k}^r,\sigma_{k}^r\in\{0, p^n\},~~~
\varepsilon_{k}^r=\epsilon_{k}^r,~~~ \sigma_{k}^r=\tau_{k}^r.
$$
We complete the proof of statement {\rm (i)}.
\end{proof}

Next we  give all  LCD negacyclic codes of length $2\ell^mp^n$ over $\mathbb F_{q}$.
Note that $1$ and $-1$ are $2\ell^mp^n$-equivalent  if $q\equiv1~(\bmod~4)$;
in this case,
let $\gamma$ be a primitive fourth root of unity in $\mathbb F_q$. That is, $X^2+1=(X-\gamma)(X+\gamma)$.
We  take an element $\beta$ in $\mathbb F_q$ so that $\beta^{\ell^m}\gamma=1$. Clearly, $\beta^{-1}=-\beta$.
If $q\equiv3~(\bmod~4)$, then $X^2+1$ is irreducible over $\mathbb F_q;$ let $\varsigma$ be an element in
$\mathbb F_{q^2}$ satisfying $X^2+1=(X-\varsigma)(X+\varsigma)$. We take $\theta$ in $\mathbb F_{q^2}$ so that
$\theta^{\ell^m}\varsigma=1$. It follows that $\theta^{-1}=\theta^q=-\theta$.

\begin{Theorem}\label{LCDNEGACYCLIC}Let $f={\rm ord}_{\ell}(q)$.  With the notation given above, we have that

{\rm (i)}~if $q\equiv1~(\bmod~4)$, then there are exactly $2^{e+1}$ LCD negacyclic codes of length $2\ell^mp^n$
over $\mathbb F_q$ generated by
$$
\prod\limits_{i=0}^{e}\hat M_{\rho_i}(\beta X)^{\varepsilon_i}\hat M_{-\rho_i}(-\beta X)^{\varepsilon_i},
~~~\varepsilon_i\in\{0, p^n\},~~~i=0,1,\cdots,e;
$$

{\rm (ii)}~if $q\equiv3~(\bmod~4)$ and $f$ is odd,
then there are exactly $2^{1+\frac{e}{2}}$ LCD negacyclic codes of length $2\ell^mp^n$
over $\mathbb F_q$ generated by
$$
(X^2+1)^{\varepsilon_0}\prod\limits_{r=1}^{m}\prod\limits_{k=0}^{\frac{\delta(r)}{2}-1}
I_{r,k}(X)^{\varepsilon^r_k}J_{r,k}(X)^{\varepsilon^r_k},
 \qquad \varepsilon_0, \varepsilon^r_k, \epsilon^r_k\in \{0, p^n\},
$$
where $I_{r,k}(X)=\hat M_{\ell^{m-r}g^k}(\theta X)\hat M_{\ell^{m-r}g^k}(-\theta X)$
and $J_{r,k}(X)=\hat M_{-\ell^{m-r}g^k}(\theta X)\hat M_{-\ell^{m-r}g^k}(-\theta X)$
for every $1\leq r\leq m$
and $0\leq k\leq \frac{\delta(r)}{2}-1$;

{\rm (iii)}~if $q\equiv3~(\bmod~4)$ and $f\equiv2~(\bmod~4)$, then there are exactly $2^{1+2e}$
LCD negacyclic codes of length $2\ell^mp^n$ over $\mathbb F_{q}$ generated by
$$
(X^2+1)^{\varepsilon_0}\prod\limits_{k=1}^{e}S_k(X)^{\varepsilon_k}T_k(X)^{\epsilon_k},
~~~ \qquad \varepsilon_0, \varepsilon_k, \epsilon_k\in \{0, p^n\},~~k=1,\cdots,e,
$$
where $S_k(X)=\hat N_{\rho_k}(\theta X)\hat N_{\rho_kq}(-\theta X)$
and $T_k(X)=\hat N_{\rho_k}(-\theta X)\hat N_{\rho_kq}(\theta X)$;

{\rm (iv)}~if $q\equiv3~(\bmod~4)$ and $f\equiv0~(\bmod~4)$, then there are exactly $2^{1+e}$
LCD negacyclic codes of length $2\ell^mp^n$ over $\mathbb F_q$
generated by
$$
(X^2+1)^{\varepsilon_0}\prod\limits_{k=1}^{e}S_k(X)^{\varepsilon_k}T_k(X)^{\varepsilon_k},
~~~ \qquad \varepsilon_0, \varepsilon_k\in \{0, p^n\},~~k=1,\cdots,e.
$$
\end{Theorem}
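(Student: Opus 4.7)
My plan is to mimic the proof of Theorem~\ref{LCDCYCLIC}. For a negacyclic code $C=\langle g(X)\rangle$ of length $2\ell^m p^n$, its dual is $\langle h^*(X)\rangle$ with $h(X)=(X^{2\ell^m p^n}+1)/g(X)$, and $C\cap C^\perp=\{0\}$ is equivalent to $\mathrm{lcm}(g,h^*)=X^{2\ell^m p^n}+1$. Since each monic irreducible factor of $X^{2\ell^m p^n}+1$ over $\mathbb{F}_q$ appears with multiplicity exactly $p^n$, writing $g=\prod f^{a_f}$ with $0\le a_f\le p^n$ and comparing with $h^*$ reduces this criterion to: $a_f\in\{0,p^n\}$ when $f$ is self-reciprocal, and $a_f=a_{f^*}\in\{0,p^n\}$ when $f\neq f^*$. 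Consequently, the count of LCD negacyclic codes equals $2^{s+t}$, where $s$ is the number of self-reciprocal monic $\mathbb{F}_q$-irreducible factors of $X^{2\ell^m p^n}+1$ and $t$ is the number of reciprocal pairs. The rest of the proof is dedicated to computing $s$ and $t$ in each of the four cases.

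The irreducible factorization of $X^{2\ell^m p^n}+1$ is obtained from Theorem~\ref{thm-cyclic} with $\lambda=-1$. When $q\equiv1\pmod4$, the fact $\beta^{2\ell^m p^n}=(-1)^{p^n}=-1$ places us in part~(A), giving $X^{2\ell^m p^n}+1=\prod_{i=0}^{e}\hat M_{\rho_i}(\beta X)^{p^n}\hat M_{\rho_i}(-\beta X)^{p^n}$. A direct computation with the roots $\{\beta^{-1}\eta^s:s\in C_{\rho_i}\}$ and the identity $\beta^{-1}=-\beta$ shows that the reciprocal of $\hat M_{\rho_i}(\beta X)$ is $\hat M_{-\rho_i}(-\beta X)$. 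No factor is self-reciprocal, and the $2(e+1)$ factors assemble into $e+1$ reciprocal pairs, proving~(i). When $q\equiv3\pmod4$, the analogue of $\alpha_1,\beta_1$ in Theorem~\ref{thm-cyclic}~(B) becomes $\varsigma,\theta$; the identity $\theta^2=-1$, forced by $\theta^{\ell^m}\varsigma=1$ together with $\ell^m$ odd, produces the universally self-reciprocal factor $(X-\theta^{-1})(X+\theta^{-1})=X^2+1$ present in cases~(ii)--(iv).

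For case~(ii), since $f$ is odd, $\langle q\rangle=\langle q^2\rangle$ in $\mathbb{Z}_{\ell^m}^*$, so I work with the $q$-cyclotomic cosets partitioned as in Lemma~\ref{lem2}. The argument of Theorem~\ref{thm-cyclic}~(B1) with $\alpha_1\leadsto\varsigma$ and $\beta_1\leadsto\theta$ shows $I_{r,k}(X)$ and $J_{r,k}(X)$ are $\mathbb{F}_q$-irreducible (the Frobenius relation $\theta^q=-\theta$ pairs the two constituent $\mathbb{F}_{q^2}$-irreducible factors). Inverting the roots of $I_{r,k}$, namely $\{\pm\theta^{-1}\eta^s:s\in C_{\ell^{m-r}g^k}\}$, and using $\theta=-\theta^{-1}$, I obtain $\{\pm\theta^{-1}\eta^{s'}:s'\in C_{-\ell^{m-r}g^k}\}$, which are precisely the roots of $J_{r,k}$. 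Therefore $(I_{r,k},J_{r,k})$ is a reciprocal pair, $X^2+1$ is the unique self-reciprocal factor, and the count is $2^{1+e/2}$.

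For cases~(iii) and~(iv), $f$ even forces me to use the $q^2$-cyclotomic decomposition~(\ref{simple-irr-decomposition2}) and Theorem~\ref{thm-cyclic}~(B2), yielding $X^{2\ell^m p^n}+1=(X^2+1)^{p^n}\prod_i S_i(X)^{p^n}T_i(X)^{p^n}$. The main obstacle, and the reason these two subcases diverge, is the behavior of multiplication by $-1$ on $q^2$-cyclotomic cosets modulo $\ell^m$: since $-1=q^{f/2}$ inside $\langle q\rangle$, when $f\equiv2\pmod4$ the integer $f/2$ is odd and $q^{f/2}\in D_{\rho_iq}$, forcing $D_{-\rho_i}=D_{\rho_iq}$, while when $f\equiv0\pmod4$, $f/2$ is even and $D_{-\rho_i}=D_{\rho_i}$. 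Threading this through the root-inversion calculation for $S_i$ using $\theta^{-1}=-\theta$, I find that in case~(iii) each of $S_i,T_i$ is self-reciprocal, so $s=1+2e$ and $t=0$, giving $2^{1+2e}$ LCD codes; whereas in case~(iv) the reciprocal of $S_i$ is $T_i$, so $s=1$ and $t=e$, giving $2^{1+e}$ LCD codes. Once this cyclotomic coset bookkeeping is pinned down, the counts follow immediately from the LCD criterion of the first paragraph.
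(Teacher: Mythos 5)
Your proposal is correct and follows essentially the same route as the paper: the lcm/max criterion reducing LCD-ness to counting self-reciprocal factors and reciprocal pairs of $X^{2\ell^mp^n}+1$, the same factorizations via $\beta$ (resp.\ $\varsigma,\theta$), and the same pairings $\hat M_{\rho_i}(\beta X)^*=\hat M_{-\rho_i}(-\beta X)$ and $I_{r,k}^*=J_{r,k}$; for (iii)--(iv), where the paper only says ``using similar arguments,'' you supply the needed detail that $D_{-\rho_i}=D_{\rho_iq}$ or $D_{\rho_i}$ according to the parity of $f/2$. One small precision: $-1\equiv q^{\lambda(m)/2}\pmod{\ell^m}$ rather than $q^{f/2}$, but since $\lambda(m)/2=(f/2)\ell^{\max(m-s,0)}$ has the same parity as $f/2$, your parity dichotomy and hence the counts are unaffected.
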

\begin{proof}
{\rm (i)}~We first indicate that,
$C_{\rho_0}=\{0\}$ and
\begin{equation*}\label{another}
C_{-\ell^{m-r}g^k}=\{-\ell^{m-r}g^k, -\ell^{m-r}g^kq,\cdots,-\ell^{m-r}g^kq^{\lambda(r)-1}\},
~0\leq k\leq\delta(r)-1, ~1\leq r\leq m,
\end{equation*}
also consist all the distinct $q$-cyclotomic cosets modulo $\ell^m$.
That is,
\begin{equation*}
X^{\ell^m}-1=M_{-\rho_0}(X)M_{-\rho_1}(X)M_{-\rho_2}(X)
 \cdots M_{-\rho_e}(X),
\end{equation*}
gives the monic irreducible factorization of $X^{\ell^m}-1$ over $\mathbb F_q$.
Since  $\gamma$ is a primitive fourth  root of unity in $\mathbb F_q$, it follows that
$X^{2\ell^m}+1=(X^{\ell^m}-\gamma)(X^{\ell^m}+\gamma)$. Then
$$
X^{\ell^m}-\gamma=\gamma M_{\rho_0}(\beta X)M_{\rho_1}(\beta X)M_{\rho_2}(\beta X)
 \cdots M_{\rho_e}(\beta X),
$$
$$
X^{\ell^m}+\gamma=-\gamma M_{-\rho_0}(-\beta X)M_{-\rho_1}(-\beta X)M_{-\rho_2}(-\beta X)
 \cdots M_{-\rho_e}(-\beta X),
$$
where $\beta$ is an element in $\mathbb F_q$ with $\beta^{\ell^m}\gamma=1$.
This implies that
$$
X^{2\ell^mp^n}+1=\prod\limits_{i=0}^{e}\hat M_{\rho_i}(\beta X)^{p^n}\hat M_{-\rho_i}(-\beta X)^{p^n},
$$
gives the irreducible factorization of $X^{2\ell^mp^n}+1$ over $\mathbb F_q$.
Recall that
$$
M_{\rho_i}(X)=\prod\limits_{j\in C_{\rho_i}}(X-\eta^j),~1\leq i\leq e.
$$
Now it is routine to check that
$$
\hat M_{\rho_i}^*(\beta X)=\prod\limits_{j\in C_{\rho_i}}(X-\beta^{-1}\eta^j)^*
=\prod\limits_{j\in C_{\rho_i}}(X-\beta\eta^{-j})
=\prod\limits_{j\in C_{-\rho_i}}(X-\beta\eta^{j})
=\hat M_{-\rho_i}(-\beta X).
$$
Using  arguments similar to those of Theorem~\ref{LCDCYCLIC}, we  get  the
statement of {\rm (i)}.

{\rm (ii)}~
From Lemma \ref{lem2}, the monic irreducible factorization of $X^{\ell^m}-1$ can be given as follows:
$$
X^{\ell^m}-1=(X-1)\prod\limits_{r=1}^m\prod\limits_{k=0}^{\frac{\delta(r)}{2}-1}M_{\ell^{m-r}g^k}(X)M_{-\ell^{m-r}g^k}(X).
$$
As discussed previously, an element $\varsigma\in \mathbb{F}_{q^2}$ can be found such that $X^2+1=(X-\varsigma)(X+\varsigma)$,
which gives $X^{2\ell^m}+1=(X^{\ell^m}-\varsigma)(X^{\ell^m}+\varsigma)$.
Further, we have $\theta^{\ell^m}\varsigma=1$, where $\theta\in \mathbb{F}_{q^2}$ satisfies $\theta^{-1}=-\theta=\theta^q$.
Thus,
$$
X^{\ell^m}-\varsigma=
(X-\theta^{-1})\prod\limits_{r=1}^m\prod\limits_{k=0}^{\frac{\delta(r)}{2}-1}\hat M_{\ell^{m-r}g^k}(\theta X)\hat M_{-\ell^{m-r}g^k}(\theta X)
$$
and
$$
X^{\ell^m}+\varsigma=
(X+\theta^{-1})\prod\limits_{r=1}^m\prod\limits_{k=0}^{\frac{\delta(r)}{2}-1}\hat M_{\ell^{m-r}g^k}(-\theta X)\hat M_{-\ell^{m-r}g^k}(-\theta X).
$$
Therefore,
the irreducible factorization
of $X^{2\ell^mp^n}+1$ over $\mathbb F_q$ are given as follows:
$$
X^{2\ell^mp^n}+1=
(X^2+1)^{p^n}\prod\limits_{r=1}^{m}\prod\limits_{k=0}^{\frac{\delta(r)}{2}-1}
I_{r,k}(X)^{p^n}J_{r,k}(X)^{p^n},
$$
where $I_{r,k}(X)=\hat M_{\ell^{m-r}g^k}(\theta X)\hat M_{\ell^{m-r}g^k}(-\theta X)$
and $J_{r,k}(X)=\hat M_{-\ell^{m-r}g^k}(\theta X)\hat M_{-\ell^{m-r}g^k}(-\theta X)$
for every $1\leq r\leq m$
and $0\leq k\leq \frac{\delta(r)}{2}-1$.
We just note that
$$
\hat M_{\ell^{m-r}g^k}^*(\theta X)=\hat M_{-\ell^{m-r}g^k}(-\theta X),~~~
\hat M_{\ell^{m-r}g^k}^*(-\theta X)=\hat M_{-\ell^{m-r}g^k}(\theta X).
$$
That is, $I_{r,k}^*(X)=J_{r,k}(X)$.

Using similar arguments, we obtain (iii) and (iv).
\end{proof}

It is known that self-dual $\lambda$-constacyclic codes can only occur among the classes of cyclic and negacyclic codes, i.e., $\lambda = 1$ or $-1$ (e.g. \cite{Dinh2014}). It is also known that self-dual cyclic codes  over a finite field exist if and only if
the code length is even and the characteristic of the underlying field is two  (\cite{Jia}).
Thus, we focus  on self-dual negacyclic codes, which  also have received a good deal of
attention.

\begin{Corollary} With the notations as in Theorem \ref{LCDNEGACYCLIC}, we have that

{\rm (i)}~if $q\equiv1~(\bmod~4)$, then there are exactly $(p^n+1)^{e+1}$ self-dual negacyclic codes of length $2\ell^mp^n$
over $\mathbb F_q$ generated by
$$
\prod\limits_{i=0}^{e}\hat M_{\rho_i}(\beta X)^{\varepsilon_i}\hat M_{-\rho_i}(-\beta X)^{p^n-\varepsilon_i},
~~~0\leq\varepsilon_i\leq p^n,~~~i=0,1,\cdots,e;
$$

{\rm (ii)}~if $q\equiv3~(\bmod~4)$, then there does not exist self-dual negacyclic codes of length $2\ell^mp^n$
over $\mathbb F_q$.
\end{Corollary}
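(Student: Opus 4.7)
My plan is to exploit the factorizations of $X^{2\ell^m p^n}+1$ over $\mathbb F_q$ already worked out in the proof of Theorem~\ref{LCDNEGACYCLIC}, together with the elementary fact that a negacyclic code $C=\langle g(X)\rangle$ of length $2\ell^m p^n$ is self-dual precisely when $g(X)=h^*(X)$, where $h(X)=(X^{2\ell^m p^n}+1)/g(X)$. In exponent language, for each irreducible factor $f$ of $X^{2\ell^m p^n}+1$, which appears with multiplicity $p^n$, self-duality demands that the multiplicity of $f$ in $g$ plus the multiplicity of $f^*$ in $g$ equals $p^n$.

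For part (i), where $q\equiv 1\pmod 4$, I would start from the factorization
$$X^{2\ell^m p^n}+1=\prod_{i=0}^e\hat M_{\rho_i}(\beta X)^{p^n}\hat M_{-\rho_i}(-\beta X)^{p^n}$$
together with the reciprocal pairing $\hat M_{\rho_i}(\beta X)^*=\hat M_{-\rho_i}(-\beta X)$ derived in the proof of Theorem~\ref{LCDNEGACYCLIC}(i). Writing a general negacyclic generator as $g(X)=\prod_{i=0}^e\hat M_{\rho_i}(\beta X)^{a_i}\hat M_{-\rho_i}(-\beta X)^{b_i}$ with $0\le a_i,b_i\le p^n$, a direct calculation of $h(X)$ followed by the reciprocal map gives
$$h^*(X)=\prod_{i=0}^e\hat M_{\rho_i}(\beta X)^{p^n-b_i}\hat M_{-\rho_i}(-\beta X)^{p^n-a_i}.$$
Equating exponents in $g(X)=h^*(X)$ yields $a_i+b_i=p^n$ for every $i$. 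Setting $\varepsilon_i:=a_i$, the companion exponent $b_i=p^n-\varepsilon_i$ is determined, giving exactly $p^n+1$ choices per index and hence $(p^n+1)^{e+1}$ self-dual codes in the asserted form.

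For part (ii), where $q\equiv 3\pmod 4$, I would simply observe that in each of the three subcases (ii), (iii), (iv) of Theorem~\ref{LCDNEGACYCLIC}, the factorization of $X^{2\ell^m p^n}+1$ over $\mathbb F_q$ contains $(X^2+1)^{p^n}$ as a factor; moreover $X^2+1$ is irreducible over $\mathbb F_q$ (since $-1$ is a non-square when $q\equiv 3\pmod 4$) and is self-reciprocal. If a self-dual negacyclic code $C=\langle g(X)\rangle$ existed, the multiplicity $c$ of $X^2+1$ in $g$ would have to satisfy $c=p^n-c$, i.e., $2c=p^n$, which is impossible because $p$ is odd.

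The plan is essentially mechanical once one has the irreducible factorizations of Theorem~\ref{LCDNEGACYCLIC}; the only subtle point is the pairing in (i). Because the reciprocal map swaps $\hat M_{\rho_i}(\beta X)$ with $\hat M_{-\rho_i}(-\beta X)$ rather than fixing them, the self-duality constraint on the two paired exponents becomes $a_i+b_i=p^n$ instead of $2a_i=p^n$, so no parity obstruction arises and the full range of $p^n+1$ values for $\varepsilon_i$ is available. Conversely, the self-reciprocal irreducible $X^2+1$ of odd multiplicity $p^n$ in the $q\equiv 3\pmod 4$ case forces exactly such a parity obstruction, cleanly explaining the dichotomy between parts (i) and (ii).
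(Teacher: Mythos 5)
Your proposal is correct and follows essentially the same route as the paper: both use the irreducible factorization $X^{2\ell^mp^n}+1=\prod_{i=0}^{e}\hat M_{\rho_i}(\beta X)^{p^n}\hat M_{-\rho_i}(-\beta X)^{p^n}$ with the reciprocal pairing $\hat M_{\rho_i}^*(\beta X)=\hat M_{-\rho_i}(-\beta X)$ to get the constraint $\varepsilon_i+(p^n-\varepsilon_i)=p^n$ for part (i), and the self-reciprocal irreducible factor $X^2+1$ of odd multiplicity $p^n$ to rule out self-dual codes in part (ii). Your write-up merely makes explicit the exponent bookkeeping that the paper leaves to the reader.
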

\begin{proof}
From Theorem \ref{LCDNEGACYCLIC} (i) and its proof, we know that
$$
X^{2\ell^mp^n}+1=\prod\limits_{i=0}^{e}\hat M_{\rho_i}(\beta X)^{p^n}\hat M_{-\rho_i}(-\beta X)^{p^n},
$$
gives the irreducible factorization of $X^{2\ell^mp^n}+1$ over $\mathbb F_q$.
Moreover, $\hat M_{\rho_i}^*(\beta X)
=\hat M_{-\rho_i}(-\beta X)$.
We deduce that (i) holds true.

(ii)~It follows from \ref{LCDNEGACYCLIC} (ii)-(iv) that $X^2+1$ is a self-reciprocal irreducible polynomial of $X^{2\ell^mp^n}+1$
over $\mathbb{F}_q$. This gives that there does not exist self-dual negacyclic codes of length $2\ell^mp^n$
over $\mathbb F_q$.

\end{proof}


\begin{thebibliography}{21}
\bibitem{Bakshi1}
G. K. Bakshi, M. Raka, A class of constacyclic codes over a finite field,
Finite Fields Appl., {\bf 18}(2012), 362-377.

\bibitem{Bakshi2}
G. K. Bakshi, M. Raka, Self-dual and self-orthogonal negacyclic codes of length $2p^n$ over a finite field,
Finite Fields Appl., {\bf 19}(2013),  39-54.

\bibitem{Berlekamp}
E. R. Berlekamp, Negacyclic codes for the Lee metric, Proceedings of the Conference
on Combinatorial Mathematics and Its Applications, Chapel Hill, NC (1968),  298-316.

\bibitem{Berlekamp2}
E. R. Berlekamp, Algebraic Coding Theory(Revised edition), Aegean Park, 1984.

\bibitem{Berman}
S. D. Berman, Semisimple cyclic and abelian codes II, English translation:
Cybernetics, {\bf 3}(1967) 17-23.

\bibitem{Castagnoli}
G. Castagnoli, J. L. Massey, P. A. Schoeller, N. von Seemann, On repeated-root cyclic codes,
IEEE Trans. Inform. Theory {\bf 37}(1991), 337-342.

\bibitem{Chen}
B. Chen, Y. Fan, L. Lin, H. Liu, Constacyclic codes over finite fields,  Finite Fields Appl., {\bf 18}(2012), 1217-1231.

\bibitem{Chen2}
B. Chen,  H. Liu, G. Zhang, A class of minimal cyclic codes over finite fields,  Designs Codes Cryptogr. (2014),
DOI: 10.1007/s10623-013-9857-9.

\bibitem{CDL14}
B. Chen, H. Q. Dinh, H. Liu, Repeated-root constacyclic codes of length $\ell p^s$ and their duals,  Discrete Appl. Math.,  (2014),  to appear.

\bibitem{Dinh2004}
H. Q. Dinh, S. R. Lopez-Permouth,
Cyclic and negacyclic codes over finite chain rings,  IEEE Trans. Inform. Theory,
{\bf 8}(2004), 1728-1744.

\bibitem{Dinh2007}
H. Q. Dinh, Complete distances of all negacyclic codes of length $2^s$ over $Z_{2^a}$, IEEE Trans. Inform. Theory,
{\bf 1}(2007), 147-161.

\bibitem{Dinh2008}
H. Q. Dinh, On the linear ordering of some classes of negacyclic and cyclic codes and their distance distributions,
Finite Fields Appl., {\bf 14}(2008), 22-40.

\bibitem{Dinh2010}
H. Q. Dinh, Constacylic codes of length $p^s$ over $\mathbb F_{p^m}+u\mathbb F_{p^m}$, Journal of
Algebra, {\bf 324}(2010), 940-950.

\bibitem{Dinh2012}
H. Q. Dinh, Repeated-root constacyclic codes of length $2p^s$, Finite Fields Appl., {\bf 18}(2012), 133-143.

\bibitem{Dinh2013}
H. Q. Dinh, Structure of repeated-root constacyclic codes of length $3p^s$ and
their duals, Discrete Math., {\bf 313}(2013), 983-991.

\bibitem{Dinh2014}
H. Q. Dinh, Structure of repeated-root cyclic and negacyclic codes of length $6p^s$ and their duals,
AMS Contemporary Mathematics {\bf 609} (2014), 69-87.

\bibitem{EY09}
M. Esmaeili and S. Yari, On complementary-dual quasi-cyclic codes, Finite Fields Appl. {\bf 15} (2009), 375-386.

\bibitem{FKHZ79}
G. Falkner, B. Kowol, W. Heise, E. Zehendner,  On the existence of cyclic optimal codes,
Atti Sem. Mat. Fis. Univ. Modena {\bf 28}(1979), 326-341.

\bibitem{Huffman}
W. C. Huffman, V. Pless, Fundamentals of Error-Correcting Codes, Cambridge University Press, Cambridge, 2003.

\bibitem{Jia}
Y. Jia, S. Ling, C. Xing, On self-dual cyclic codes over finite fields, IEEE Trans. Inform. Theory {\bf 57}(2011),
2243-2251.

\bibitem{Kai}
X. Kai, S. Zhu, On cyclic self-dual codes, Appl. Algebra Engrg. Comm. Comput., {\bf 19}(2008), 509-525.


\bibitem{Kai10}
X. Kai, S. Zhu, On the distance of cyclic codes of length $2^e$ over $Z_4$, Discrete Math.,  {\bf 310} (2010), 12-20.

\bibitem{Kathuria}
L. Kathuria, M. Raka,  Existence of cyclic self-orthogonal codes: A note on a result of Vera Pless,
Adv. Math. Commun. {\bf 6}(2012), 499-503.



\bibitem{Lidl}
R. Lidl, H. Niederreiter, Finite Fields, Cambridge University Press, Cambridge, 2008.

\bibitem{M92}
J. L. Massey, Linear codes with complementary duals, Discrete Math. {\bf 106/107} (1992), 337-342.

\bibitem{Massey}
J. L. Massey, D. J. Costello, J. Justesen, Polynomial weights and code constructions, IEEE Trans. Inform. Theory {\bf 19}(1973), 101-110.

\bibitem{RS86}
R. M. Roth, G. Seroussi,  On cyclic MDS codes of length $q$ over $GF(q)$,  IEEE Trans. Inform. Theory {\bf 32}(1986), 284-285.

\bibitem{Salagean}
A. S\v{a}l\v{a}gean, Repeated-root cyclic and negacyclic codes over a finite chain ring, Discrete Appl. Math., {\bf 154}(2006), 413-419.

\bibitem{S04}
N. Sendrier, Linear codes with complementary duals meet the Gilbert-Varshamov bound, Discrete Math. {\bf 285} (2004), 345-347.

\bibitem{Sharma}
A. Sharma, G. K. Bakshi, V. C. Dumir, M. Raka, Cyclotomic numbers and primitive idempotents in the ring $GF(q)[X]/\langle X^{p^n}-1\rangle$,
 Finite Fields Appl., {\bf 10}(2004), 653-673.

\bibitem{van}
J. H. van Lint, Repeated-root cyclic codes, IEEE Trans. Inform. Theory {\bf 37}(1991), 343-345.

\bibitem{Wan}
Z.  Wan, Lectures on Finite Fields and Galois Rings, World Scientific Publishing, 2003.

\bibitem{Wolfmann}
J. Wolfmann, Negacyclic and cyclic codes over $Z_4$, IEEE Trans. Inform. Theory, {\bf 7}(1999), 2527-2532.

\bibitem{YM94}
X. Yang and J. L. Massey, The condition for a cyclic code to have a complementary dual, Discrete Math. {\bf 126}(1994), 391-393.




\end{thebibliography}
\end{document}